\newtheorem{lemma}{Lemma}
\newtheorem{theorem}{Theorem}
\def\bege{\begin{equation}} \def\ende{\end{equation}}
\def\begr{\begin{eqnarray}} \def\endr{\end{eqnarray}}
\def\bege{\begin{equation}} \def\ende{\end{equation}}
\def\begr{\begin{eqnarray}} \def\endr{\end{eqnarray}}
\def\bnum{\begin{enumerate}} \def\enum{\end{enumerate}}
\newcounter{mytempeqncnt}
\begin{document}
\bibliographystyle{IEEEtran}
\title{Infectious Probability Analysis on COVID-19 Spreading with Wireless Edge Networks}

\author{Xuran Li, Shuaishuai Guo,~\IEEEmembership{Member, IEEE,} Hong-Ning Dai,~\IEEEmembership{Senior Member, IEEE} and Dengwang Li
\thanks{Manuscript received January 15, 2022; revised May 1, 2022; accepted June 16, 2022.
The work is supported in part by the National Natural Science Foundation of China under Grant 62171262, in part by Shandong Provincial Natural Science Foundation under Grant ZR2021YQ47, in part by Major Scientific and Technological Innovation Project of Shandong Province under Grant 2020CXGC010109, in part by Tashan Young Scholar under Grant No. tsqn201909043, in part by the National Natural Science Foundation of China (61971271), the Jinan City-School Integration Development Strategy Project (JNSX2021023), and the Shandong Province Major Technological Innovation Project (2022CXGC010502).
(\emph{Corresponding author: Shuaishuai Guo}.)
}
\thanks{Xuran Li and Dengwang Li are with Shandong Key Laboratory of Medical Physics and Image Processing, School of Physics and Electronics, Shandong Normal University, Jinan 250061, China (e-mail: sdnulxr@sdnu.edu.cn; dengwang@sdnu.edu.cn).}
\thanks{ Shuaishuai Guo is with School of Control Science and Engineering, Shandong University, Jinan 250061, China and also with Shandong Provincial Key Laboratory of Wireless Communication Technologies (e-mail: shuaishuai\textunderscore guo@sdu.edu.cn).}
\thanks{Hong-Ning Dai is with the Department of Computer Science, Hong Kong Baptist University, Hong Kong SAR (e-mail: hndai@ieee.org).}
}


\maketitle
%
\begin{abstract}
The emergence of infectious disease COVID-19 has challenged and changed the world in an unprecedented manner. The integration of wireless networks with edge computing (namely wireless edge networks) brings opportunities to address this crisis. In this paper, we aim to investigate the prediction of the infectious probability and propose precautionary measures against COVID-19 with the assistance of wireless edge networks. Due to the availability of the recorded detention time and the density of  individuals within a wireless edge network, we propose a stochastic geometry-based method to analyze the infectious probability of individuals. The proposed method can well keep the privacy of individuals in the system since it does not require to know the location or trajectory of each individual. Moreover, we also consider three types of mobility models and the static model of individuals. Numerical results show that analytical results well match with simulation results, thereby validating the accuracy of the proposed model. Moreover, numerical results also offer many insightful implications. Thereafter, we also offer a number of countermeasures against the spread of COVID-19 based on wireless edge networks. This study lays the foundation toward predicting the infectious risk in realistic environment and points out directions in mitigating the spread of infectious diseases with the aid of wireless edge networks.


\end{abstract}
\begin{IEEEkeywords}
Infectious probability analysis,  stochastic geometry,  wireless edge networks,  mobility models.
\end{IEEEkeywords}

\IEEEpeerreviewmaketitle

\section{Introduction}
\label{sec:intro}
Recently, the rapid spread of the new coronavirus disease (COVID-19) has brought serious challenges to the whole world. As a high infectious disease, the virus of COVID-19 could spread from humans to humans through respiratory droplets, aerosols and other transmission manners~\cite{Wu:Nature2020,wong2022transmission}. This disease attacks the respiratory system of infected individuals and results in many symptoms, such as fever, fatigue, dry cough, muscular pain, and breathlessness. Therefore, taking effective countermeasures to combat the spread of the COVID-19 becomes an important research topic for researchers in different fields~\cite{Chang:2020nature,9405303}.

\subsection{Motivation}
Research efforts from communications and computer communities have been conducted to fight against the spread of COVID-19. In particular, recent studies~\cite{Ning:2021JSAC,Azana:2021JNCA, Pace:2019TII, Qadri:CST20} have investigated to deploy the Internet of medical things (IoMT) and establish the telemedicine platforms so as to mitigate the bottlenecks at public healthcare institutions. Meanwhile, data analytics and artificial intelligence (AI) techniques have been investigated to combat COVID-19 from different perspectives, such as diagnosis of new variants of COVID-19, spread prediction, and transmission risk analysis~\cite{Zhou:2021JSAC,Hossain:2020IN, Gomez:2021CogCom,Islam:2021IA,ren2022optimal}. Moreover, blockchain techniques have been adopted for contact tracing with privacy preservation~\cite{LvW:2020TNCE, xrli:2021PMC,Fakhri2020:SCS}. However, most of the aforementioned methods have stringent demands on the network performance, such as low latency, high bitrate, and the ability to handle requests from a large number of devices. These proposed methods may not be feasible for off-the-shelf network/computing infrastructures. 

To fulfill the increasing computational/storage demands in IoMT and telemedicine systems, a typical solution is to outsource computational-complex tasks to remote cloud service providers, which have stronger computational capability than end devices. However, those cloud service providers are often owned by untrusted third parties, who may mistakenly or intentionally breach the data privacy~\cite{9665215,10.1145/3425707}. Meanwhile, uploading computing tasks to remote clouds inevitably leads to high latency. As an important complement to cloud computing, edge computing has recently appeared as a promising solution to IoMT~\cite{9320508,Ning:2021JSAC} since some computational tasks can be offloaded to nearby edge computing nodes. Edge computing nodes are typically deployed with existing wireless infrastructures (such as macro base stations, small base stations, access points, and IoMT gateways) to form \emph{wireless edge networks}. In this paper, we investigate the adoption of wireless edge networks to analyze and combat the spread of COVID-19.

The spread prediction of COVID-19 is one of the most important countermeasures against the viral outbreaks~\cite{TOMAR2020138762}. For example, heat warning can provide a rough estimation on possible infectious people by analyzing thermal images~\cite{MORABITO2020140347}. Moreover, the prediction system based on the confirmed infection cases~\cite{10.1145/3397271.3401429} can predict hazard areas. Despite the advent of these studies, most of them rely on data analysis at cloud servers, which have privacy-leakage risks (as analyzed above). In addition, the timely warning is crucial for an early warning and disease control while cloud services inevitably cause high latency. To address these issues, offloading the prediction tasks at edge nodes is a promising solution. To the best of our knowledge, there is no study on spread prediction of COVID-19 and investigation of countermeasures based on wireless edge networks.

\subsection{Contributions}
In this paper, we focus on establishing an analytical framework to analyze the infectious probability of susceptible individuals and providing early warnings by exploiting wireless edge networks. As indicated in previous studies~\cite{yan2008distribution,ZHANG2020201,LIU2021106542}, the infectious probability of a susceptible individual heavily depends on both the contact time and the distance between the infected individual and the desired susceptible individual. The detention time of individuals in a network is essentially available at an edge server by service providers (i.e., we use the detention time, which is longer than the contact time, to calculate the infectious risk). Moreover, the distance can be modelled by stochastic mechanism and stochastic geometry~\cite{yan2008distribution,Andrews:2011Tcom,Hmamouche:2021IP,Lu:2021CST}. Inspired by these previous findings, we then establish an analytical framework to evaluate the infectious probability of susceptible individuals. Moreover, we consider the impact of the mobility of individuals into our analytical framework. In particular, our framework consider three types of mobility models: the random direction (RD) model~\cite{Govindan:2011TWC,Tang:2020IoT}, random walk model (RWK)~\cite{XJunfei:2014ICST,GZhenhua:2011ICC} and random waypoint (RWP) model~\cite{TJie:2018TWC,Valentine:2020IA,Fernandez:2018TVT}. Unlike other methods requiring the location or trajectory of each individual, our method can better protect the privacy of individuals since only the recorded detention time\footnote{The detection time or access time can be obtained by mobile service providers while less privacy is leaked in contrast to other pandemic surveillance methods, which require to access more user-privacy sensitive data~\cite{RIBEIRONAVARRETE2021120681}.} is used in our analysis.
After establishing the analytical framework of the infectious probability and evaluating the impacts of multiple factors on the infectious probability by extensive simulations, we also discuss the countermeasures to combat the spread of infectious diseases like COVID-19 and its variants. For example, edge servers can offer early warnings to individuals within the same network as the infected individual so that the corresponding countermeasures (e.g., keeping social distance, wearing masks, and improving ventilation) can be done.

The main research contributions of this paper can be summarized as follows.
\begin{itemize}
\item In this work, we established a novel analytical framework to analyze the infectious probability of susceptible
individuals within wireless edge networks. Deploying this analytical framework at wireless edge networks can potentially provide individuals with early warnings in time while leaking less privacy of individuals.

\item This analytical framework also investigates the effect of individual mobility on infectious probability. Specifically, the proposed framework considers the three most commonly used mobility models: the RD model, RWK model and RWP model. Extensive simulation results agree with the analytical results, implying the accuracy of the proposed framework.

\item Analytical results suggest a number of countermeasures, such as early warning and social distancing. The integration of these countermeasures with wireless edge networks can effectively mitigate the the spread of infectious diseases.

\end{itemize}

The remainder of the paper is organized as follows. Section~\ref{sec:pre} introduces the network model, the mobility models of individuals, and the infectious model. Section \ref{IPA} presents the infectious probability analysis, including the analysis of static individuals and the impact of mobility on the infectious probability analysis. Section \ref{sec:result} demonstrates the simulation results. Section \ref{sec:conclusion} concludes this paper. Section \ref{sec:future} pointed out the future research directions.

\section{System Model}
\label{sec:pre}

This section presents the models including the network model, infectious model and mobile models in Sections~\ref{subsec:network},~\ref{subsec:infectious}~\ref{subsec:mobility}, respectively.


\subsection{Network Model}\label{subsec:network}

Consider a wireless edge network as illustrated in Fig.~\ref{fig:network}, where a number of mobile users are connected with a base station, e.g., macro base station, Evolved Node B (eNB) in 4G LTE or gNode in 5G networks. The base station is equipped with an edge server, which can provide data storage and computation services for mobile users. In this paper, we also exploit the edge server to analyze the infectious probability and send early warnings.

\begin{figure}[t]
\centering
\includegraphics[width=9cm]{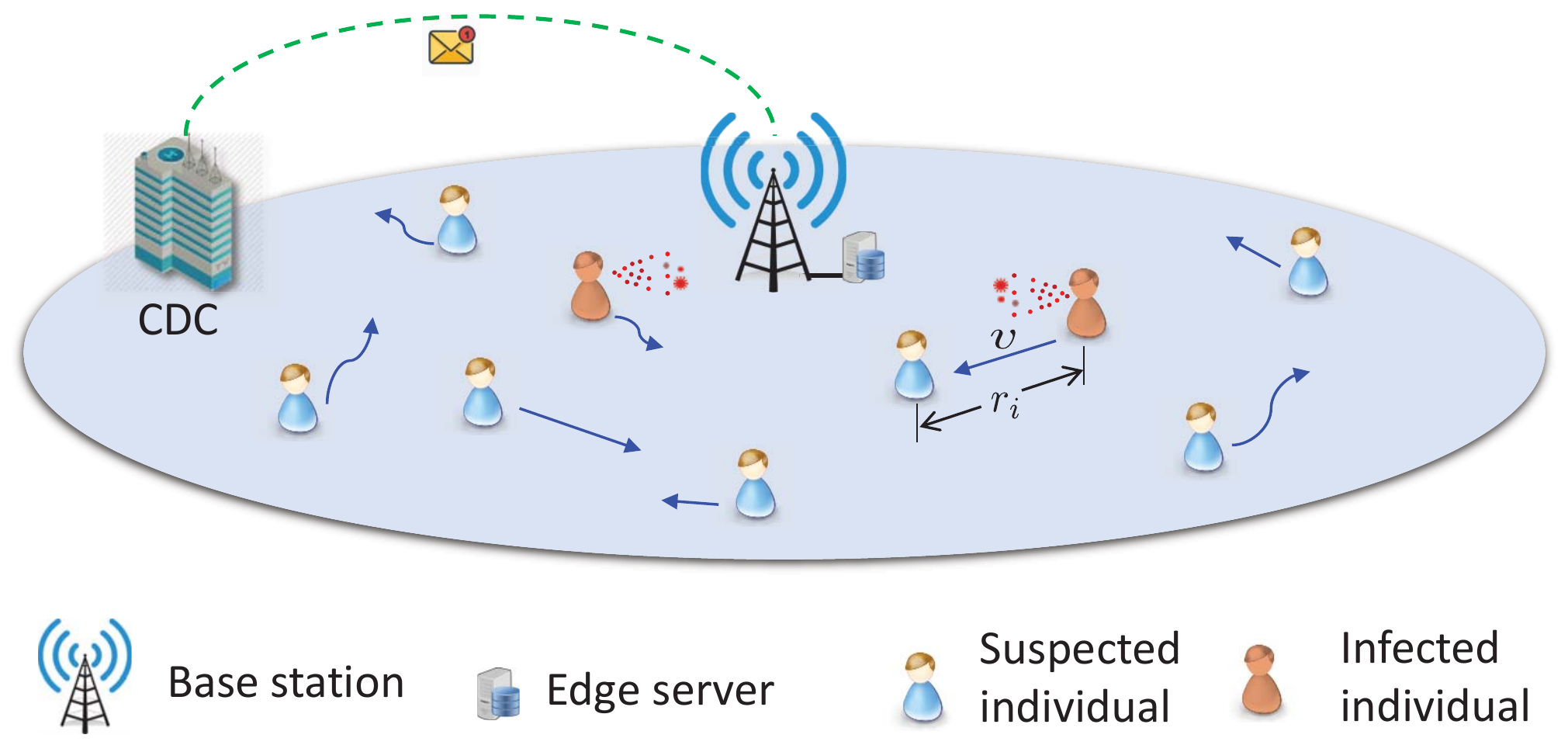}
\caption{System Model of Infectious Probability Analysis in Wireless Edge Network.}
\label{fig:network}
\end{figure}

There are two types of mobile users randomly distributed in this network. (1) The \emph{infected individuals} are those individuals who have already been infected by the virus and may transmit the virus due to the high virus volume. (2) The \emph{susceptible individuals} are those individuals who are not infected and can nevertheless become infected by infected individuals due to the exposure to the infected individuals (especially in poorly ventilated environment or enclosed environment). Section~\ref{subsec:infectious} will present the infectious model. Both infected individuals and susceptible individuals are randomly distributed according to the uniform distribution. Moreover, those users are moving within this network according to different mobile patterns (they are modelled according to three types of mobility models in Section~\ref{subsec:mobility}). Note that both the number of infected individuals and the number of susceptible individuals are time-varying since some newly infected individuals (or susceptible individuals) may join the network while some of them may leave in this network due to the mobility of individuals. In addition, the susceptible individuals may become the infected individuals after being medical diagnosed or tested (e.g., nucleic acid tests).

The edge server can analyze the infectious risk based on the following available information. i)
The statistical characteristics of individuals' distribution  (i.e., the location distribution of individuals) in the network area, can be obtained from the traffic management system or other related departments (or services). ii) Infected individuals can be obtained by Centers for Disease Control and Prevention (CDC) or other agencies though the privacy of the infected individuals can be properly protected by pseudonymity or other cryptographic schemes (i.e., hiding the exact individual identification). iii) The detention time of each individual (both infected individual and susceptible individual) is available by mobile services providers. With the availability of the above information, we aim to establish an analytical framework to analyze the infectious probability of each susceptible individual.
In our framework, both specific location and trajectory of each individual are not used. Therefore, the risk of privacy disclosure for each individual is avoided.
Without loss of generality, the analysis of the infectious probability is conducted by assuming the reference susceptible individual to be located at the center. Based on the analytical model, the edge server can calculate the infectious probability of each individual and send early warning messages to the individuals (or suggest the corresponding countermeasures in Section~\ref{sec:future}).

\subsection{Infectious Model}\label{subsec:infectious}
Although there are a number of studies on investigating the transmission of infectious diseases~\cite{yan2008distribution,ZHANG2020201,LIU2021106542,SUN2020102390}, they are too complex to be directly adopted for the spread prediction of infectious diseases in the scenarios of wireless edge networks. Thus, it is a necessity to establish a simple yet effective infectious model for estimating infectious risks in a crowded scenario with the help of wireless edge networks.

As indicated by previous studies~\cite{yan2008distribution,LIU2021106542,SUN2020102390,SINGANAYAGAM2022183}, the infectious risk significantly drops with the increased distance between the infected individual and a susceptible individual. Inspired by these findings, we propose a simple-yet-effective infectious model with consideration of multiple factors (such as social distance, respiratory droplets, virus volume, time, and transmission factor of infectious virus). In this model, we first define the metric of instant infectious strength denoted by $I_\text{inf}$ to evaluate the instantaneous spreading volume of the virus transmitted from infected individuals to the susceptible individual within a unit of time because previous studies indicated the positive relation between the volume of virus and the infectious rate~\cite{LIU2021106542,SUN2020102390,SINGANAYAGAM2022183}. Specifically, $I_\text{inf}$ is given as follows,
\begin{equation}
I_\text{inf}= \sum_{i=1}^N V_{i} \cdot r_{i}^{-\eta},
\label{eq:Iinf}
\end{equation}
where $N$ is the number of infected individuals, $V_{i}$ is the virus volume generated by the $i$th infected individual in the unit time period, $r_{i}$ is the distance from the $i$th infected individual to the susceptible individual, and $\eta$ is the path loss factor of the virus spreading that varies from 2 to 7~\cite{LIU2021106542}.Note that the virus volume is a variable that varies with different infected individuals~\cite{he2020temporal}. For example, a higher virus volume may be generated by infected individuals whose have more serious symptoms while the infected individuals with slight symptoms may generate fewer viral particles and have a smaller virus volume~\cite{he2020temporal,SINGANAYAGAM2022183}. We denote the maximal virus volume and the minimal virus volume generated by an infected individual by $V_{M}$ and $V_{m}$, respectively. Therefore, the virus volume $V_{i}$ generated by the $i$th infected individual varies from $V_{m}$ to $V_{M}$.

As shown in previous studies~\cite{LIU2021106542,SUN2020102390}, a higher virus volume leads to a higher chance of an individual being affected and a longer detention time (or contact time) leads to a higher risk of an individual being affected. We let $V_\text{th}$ denote the threshold virus volume that may lead a susceptible individual to become an infected individual. We then define the instantaneous infectious probability to evaluate the probability that a susceptible individual may become an infected individual within a unit of time. The instantaneous infectious probability is denoted by $\mathbb{P}_\text{inf}$, which is expressed as follows,
\begin{equation}
\mathbb{P}_\text{inf} =  P( I_\text{inf}\ge V_\text{th}) = P \left(  \sum_{i=1}^N V_{i} \cdot r_{i}^{-\eta} \ge V_\text{th} \right).
\label{eq:pinf}
\end{equation}

We next define the total risk of a susceptible individual becoming infected with consideration of the detention time. The total risk of a susceptible individual becoming infected is denoted by $\mathbb{R}_\text{total}$. Since the detention duration $T$ that a susceptible individual is available at the edge server, we can calculate the total risk $\mathbb{R}_\text{total}$  as follows,
\begin{equation}
\mathbb{R}_\text{total}= \int_{0}^{T} \mathbb{P}_\text{inf}  dt= \int_{0}^{T} P \left(  \sum_{i=1}^N V_{i} \cdot r_{i}^{-\eta} \ge V_\text{th} \right) dt.
\label{eq:Rtotal}
\end{equation}

\subsection{Mobility Models}\label{subsec:mobility}

The mobility of an infected individual may affect the infectious probability of the susceptible individual due to the varied distance. Our analytical framework also considers three conventional mobility models, which are given as follows.

\subsubsection{RD Model} Each infected individual moves toward a random direction within $[0, 2 \pi)$, which is independent of other infected individuals. Meanwhile, the infected individual moves a random distance $R$ toward this direction at a constant speed $v$. Upon his/her arrival of a certain location, another random direction and distance $R$ are chosen; this procedure repeats.

\subsubsection{RWK Model} This model was originally proposed to describe the unpredictable movement of particles in nature~\cite{XJunfei:2014ICST}. In our framework, each infected individual moves toward a random direction within $[0, 2 \pi)$, independently of the other infected individuals, and moves with a fixed distance $W$ toward this direction at a constant speed $v$. Similarly, another random direction is also chosen when the infected individual reaches a location. This procedure repeats the above steps.

\subsubsection{RWP Model} In the beginning, each infected individual stops for a random
time $T$ at his/her initial location. He/she then moves toward a random direction within $[0, 2 \pi)$, independently of other infected individuals, and moves with a random distance $R$ toward this direction at a constant speed $v$. Upon his/her arrival, he/she stops for another random time $T$, then selects another random direction and distance $R'$. This procedure repeats.
During the procedures, both the speed $v$ and the pause time $T$ are random variables, which can be sampled independently from distributions $f_{v}(v)$ and $f_{T}(T)$, respectively. The distribution functions of $f_{v}(v)$ and $f_{T}(T)$ are usually assumed to be uniform distribution functions~\cite{Bandyopadhyay:2007TMC,Fernandez:2018TVT,Valentine:2020IA} though the authors in~\cite{Hyytia:2006TMC} mentioned that any distribution can be used (e.g., the beta distribution and discrete distributions).


\section{Infectious Probability Analysis}\label{IPA}
In this section, we focus on analyzing the infectious probability and deriving its closed-form  expression. 

Despite the advent of previous studies~\cite{Martin:journalsG09,Gong:2014TMC,Irio:2018TCOM}, there is no exact analytical expression for the probability density function (PDF) of instant infectious strength $I_\text{inf}$, which nevertheless is crucial for deriving the infectious probability. To address this issue, we separately analyze the virus volume from the dominant infected individual and that from minor infected individuals, inspired by previous work~\cite{Chetlur:2017TCOM}. The dominant infected individual is the nearest infected individual to the susceptible individual while the other infected individuals are minor infected individuals since airborne and droplet-borne infections are typically limited by distance~\cite{SUN2020102390}. 
Based on this consideration, we mainly investigate the effect of dominant infected individuals and approximate the aggregated impact from the rest of the infected individuals (namely minor infected individuals) to a Gaussian random variable. We first present an analysis on the infectious probability with static individuals in Section~\ref{subsec:static} and then extend our analysis to the scenarios of mobile individuals in Section~\ref{subsec:mobi-prob}.

\subsection{Infectious Probability with Static Individuals}\label{subsec:static}

According the infectious model in Eq.~\eqref{eq:Iinf}, we find that the infectious probability is mainly determined by two variables: the virus volume of the infected individual, and the distance between the susceptible individual and the infected individual.
The virus volume of infected individuals follows the uniform distribution within $[V_{m}, V_{M}]$ (as given in Section~\ref{subsec:infectious}). We then have the probability density function of virus volume $V_{i}$ as follows,
\begin{equation}
{f_V}(V_{i}) = \frac{1}{V_{M}-V_{m}}  , V_{m} < V_{i} \le V_{M}.
\label{eq:fv}
\end{equation}

Compared to the virus volume generated by an infected individual, the distance between the reference susceptible individual and an infected individual $r$ has a more obvious impact on the infection process~\cite{SUN2020102390}. Since the reference susceptible individual is located at the center of circular area and each infected individual follows  independent and identically distributed (i.i.d.) uniform distribution, the probability density function of $r$ is given by~\cite{Alouini:1999TVT},
\begin{equation}
{f_R}(r) = \frac{2\pi r}{\pi {D^2}} = \frac{2r}{D^2} ,~ 0 < r \le D.
\label{eq:fr}
\end{equation}

We next have the following lemmas.
\begin{lemma}
When infected individuals are static,
the distance between the susceptible individual and the dominant infected individual $R_{1}$ is given by
\begin{equation}
f_{R_{1}}\left(r_{1}\right)=\frac{2Nr_{1}(D^2-r_{1}^2)^{N-1}}{D^{2N}}.
\label{eq:fR1}
\end{equation}
\label{lemma:FR1_s}
\end{lemma}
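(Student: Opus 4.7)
The plan is to recognize that $R_1$ is the minimum of $N$ i.i.d.\ random variables, each distributed as the distance from the center of the disk to a uniformly placed point, and to apply the standard order-statistics identity for the minimum.

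First, I would note that the density $f_R(r) = 2r/D^2$ in Eq.~\eqref{eq:fr} integrates to the CDF
\begin{equation}
F_R(r) = \frac{r^2}{D^2}, \qquad 0 < r \le D,
\end{equation}
which also has a direct geometric interpretation: for a point chosen uniformly in a disk of radius $D$, the probability of lying within distance $r$ of the center is the ratio of areas $\pi r^2 / (\pi D^2)$.

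Next, since the $N$ infected individuals are placed i.i.d.\ uniformly in the disk, their distances $R_1, \ldots, R_N$ to the central susceptible individual are i.i.d.\ with CDF $F_R$. The dominant distance, being the minimum, satisfies the complementary event "no infected individual lies within $r_1$,'' so
\begin{equation}
\Pr(R_{(1)} > r_1) = \bigl(1 - F_R(r_1)\bigr)^N = \left(1 - \frac{r_1^2}{D^2}\right)^N.
\end{equation}
Hence $F_{R_1}(r_1) = 1 - (1 - r_1^2/D^2)^N$.

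Finally, I would differentiate with respect to $r_1$:
\begin{equation}
f_{R_1}(r_1) = N\left(1 - \frac{r_1^2}{D^2}\right)^{N-1} \cdot \frac{2r_1}{D^2} = \frac{2N r_1 (D^2 - r_1^2)^{N-1}}{D^{2N}},
\end{equation}
which is the claimed expression. There is no real obstacle here; the only subtlety worth flagging is justifying that the i.i.d.\ uniform placement of individuals in the disk implies i.i.d.\ distances to a fixed central reference point, which follows immediately from the rotational symmetry already used to derive Eq.~\eqref{eq:fr}.
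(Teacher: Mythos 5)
Your proposal is correct and follows essentially the same route as the paper's proof: treat $R_1$ as the minimum of $N$ i.i.d.\ distances, write $\Pr(R_1 > r_1) = (1 - F_R(r_1))^N$ with $F_R(r) = r^2/D^2$, and differentiate to obtain $f_{R_1}(r_1) = N(1-F_R(r_1))^{N-1} f_R(r_1)$. The only cosmetic difference is that you make the CDF $F_R$ and its geometric area-ratio interpretation explicit, whereas the paper leaves $F_R$ implicit and substitutes it only at the final step.
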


\begin{proof}
Since the dominant infected individual is the closest infected individual to the susceptible individual, the conditional cumulative distribution function (CDF) of $R{_1}$ can be computed by the approach in~\cite{Chetlur:2017TCOM} as follows,
\begin{equation}
\begin{aligned}
F_{R_{1}}(r_{1}) &=\mathbb{P}\left(r_{1}\leq r \right) =1-\mathbb{P}\left(\min \left\{R_{N}\right\}> r \right) \\
&=1-\mathbb{P}\left(R_{1}>r, R_{2}>r, \ldots, R_{N}>r \right) \\
& \stackrel{(a)}{=} 1-\left(1-F_{R}(r)\right)^{N},
\end{aligned}
\end{equation}
where $(a)$ follows the i.i.d. nature of the set of distances $R_{N}$. Differentiating the above expression with respect to $r$, PDF of $R{_1}$ is given by
\begin{equation}
\begin{split}
f_{R_{1}}\left(r_{1}\right)&=N\left(1-F_{R}\left(r_{1} \right)\right)^{N-1} f_{R}\left(r_{1} \right)\\
&=\frac{2Nr_{1}(D^2-r_{1}^2)^{N-1}}{D^{2N}}.
\end{split}
\end{equation}
\end{proof}

For the susceptible individual situated at the origin, the PDF of the distances between the susceptible individual and the minor infected individuals \{$R_{i}$\}
conditioned on $R_{1}$ is given by
\begin{equation}
f_{R_{i}}\left(r_{i} \mid r_{1}\right)= \frac{2 r_{i}}{D^{2}-r_{1}^{2}},  r_{1} \leq r_{i} \leq D.
\label{eq:fRi_s}
\end{equation}

\begin{lemma}
When the infected individuals are static, the mean of the virus volume (excluding the virus volume from the dominant infected individual) conditioned on the distance between the susceptible individual and the dominant infected individual $R_{1}$ is
\begin{equation}
\begin{aligned}
\mu_{I_{N-1}} =\frac{(N-1) (V_{m}+V_{M}) (D^{2-\eta}-u_{1}^{2-\eta})}{(2-\eta)\left(D^{2}-u_{1}^{2}\right)}.
\label{eq:EIN1_s}
\end{aligned}
\end{equation}
\label{lemma:EIN1_s}
\end{lemma}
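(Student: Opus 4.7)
The plan is to compute $\mu_{I_{N-1}}$ directly from its definition as the conditional expectation of the sum of virus contributions from the $N-1$ minor infected individuals, given the distance $R_1=r_1$ to the dominant (nearest) infected individual. Concretely, I take
\[
\mu_{I_{N-1}} = \mathbb{E}\!\left[\sum_{i=2}^{N} V_i\, R_i^{-\eta}\,\Big|\, R_1 = r_1\right],
\]
and exploit linearity of expectation together with the independence of the virus volumes $V_i$ from the distances $R_i$ (the $V_i$ depend only on the infected individual's symptom severity, not on location).

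Next, since the $V_i$ are i.i.d.\ uniform on $[V_m, V_M]$, their common mean is $(V_m+V_M)/2$. For the spatial factor, I use the conditional PDF of each minor distance given in Eq.~\eqref{eq:fRi_s}, namely $f_{R_i}(r_i\mid r_1) = 2 r_i/(D^2 - r_1^2)$ on $[r_1, D]$, which follows because conditioning on the nearest infected individual being at distance $r_1$ restricts the other $N-1$ uniformly distributed points to the annulus $r_1 \le r \le D$. Then
\[
\mathbb{E}\!\left[R_i^{-\eta}\mid R_1 = r_1\right] = \int_{r_1}^{D} r_i^{-\eta}\,\frac{2 r_i}{D^2 - r_1^2}\,\mathrm{d}r_i,
\]
which is an elementary power integral yielding $\tfrac{2(D^{2-\eta} - r_1^{2-\eta})}{(2-\eta)(D^2 - r_1^2)}$.

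Multiplying the $N-1$ identical contributions by $\mathbb{E}[V_i]$ and by this spatial expectation gives exactly the claimed closed-form expression (with $u_1$ read as $r_1$). Because the $N-1$ minor distances are exchangeable and their marginal conditional PDF is the one used above, there is no subtlety in summing the per-individual expectations: linearity takes care of the joint dependence without requiring independence of the $R_i$'s among themselves.

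I do not anticipate a genuine obstacle here; the only mild point of care is to justify that the conditional distribution of each $R_i$ for $i\ge 2$ given $R_1=r_1$ is indeed the truncated form in Eq.~\eqref{eq:fRi_s}. This follows from the standard order-statistics argument used to derive Lemma~\ref{lemma:FR1_s}: conditional on the minimum being $r_1$, the remaining points are i.i.d.\ uniform on the disk of radius $D$ restricted to $\{r \ge r_1\}$, whose radial density is $2r/(D^2 - r_1^2)$. Once that is noted, the rest is a direct integration, and the restriction $\eta \neq 2$ is implicit in the formula (the case $\eta=2$ would produce a logarithm and is excluded by the stated range $\eta \in [2,7]$ in practice — strictly $\eta>2$ for the closed form to hold as written).
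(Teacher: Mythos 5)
Your proposal is correct and follows essentially the same route as the paper's proof: both write $\mu_{I_{N-1}}$ as $(N-1)$ times a single conditional expectation via linearity and the exchangeability of the minor individuals, factor it using the independence of $\{V_i\}$ and $\{R_i\}$, and evaluate the same two elementary integrals (the uniform mean $(V_m+V_M)/2$ and the truncated power integral over $[r_1,D]$). Your extra remarks — the order-statistics justification of the conditional density $2r_i/(D^2-r_1^2)$ and the implicit $\eta\neq 2$ restriction (the paper's $u_1$ is indeed a typo for $r_1$) — are sound refinements but do not change the argument.
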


\begin{figure*}[ht]
\centering
\subfigure[Trails with the RD model]{
\label{fig:Trails-a}
\includegraphics[width=5.8cm]{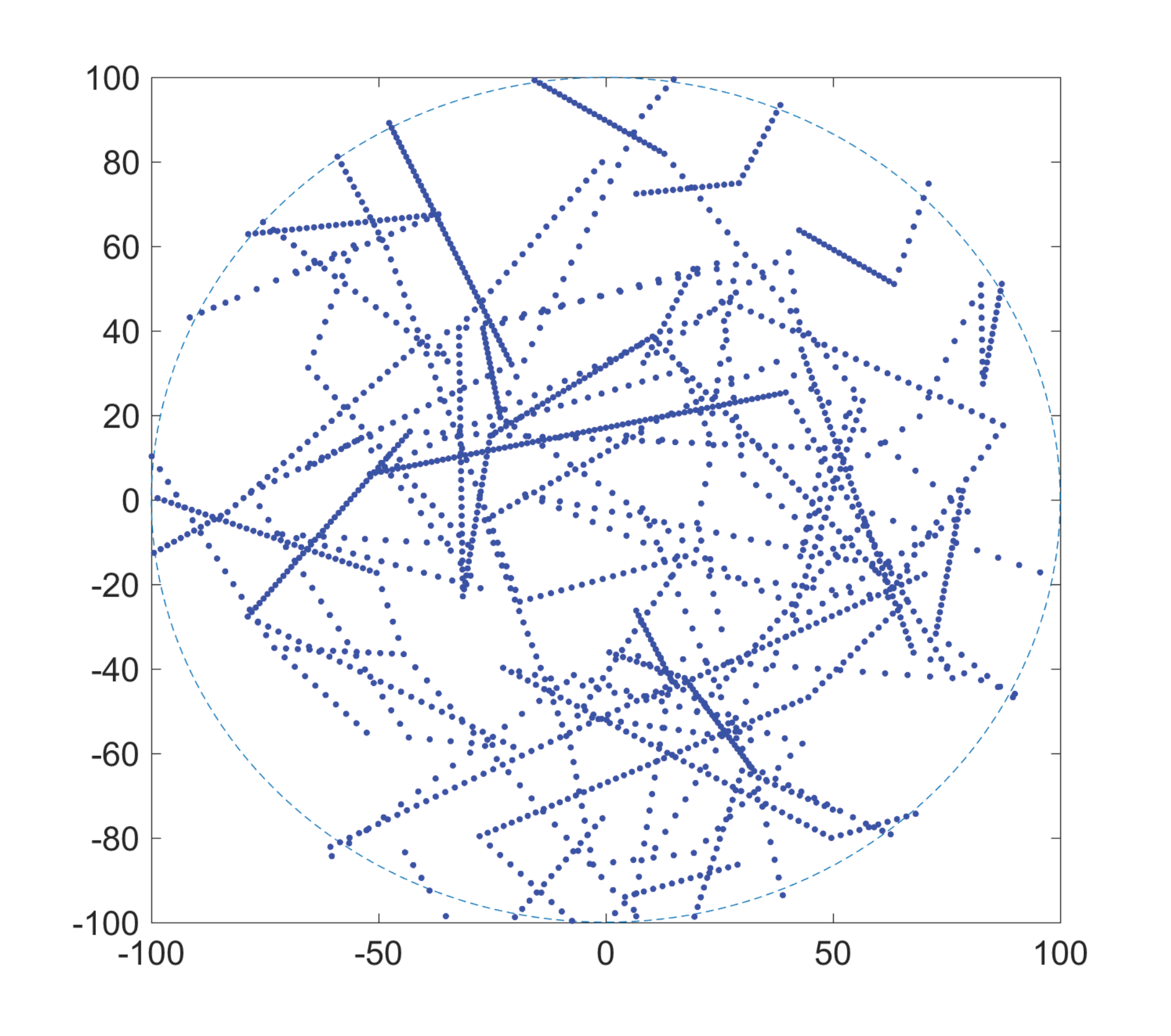}}\hfil
\subfigure[Trails with the RWK model]{
\label{fig:Trails-b}
\includegraphics[width=5.8cm]{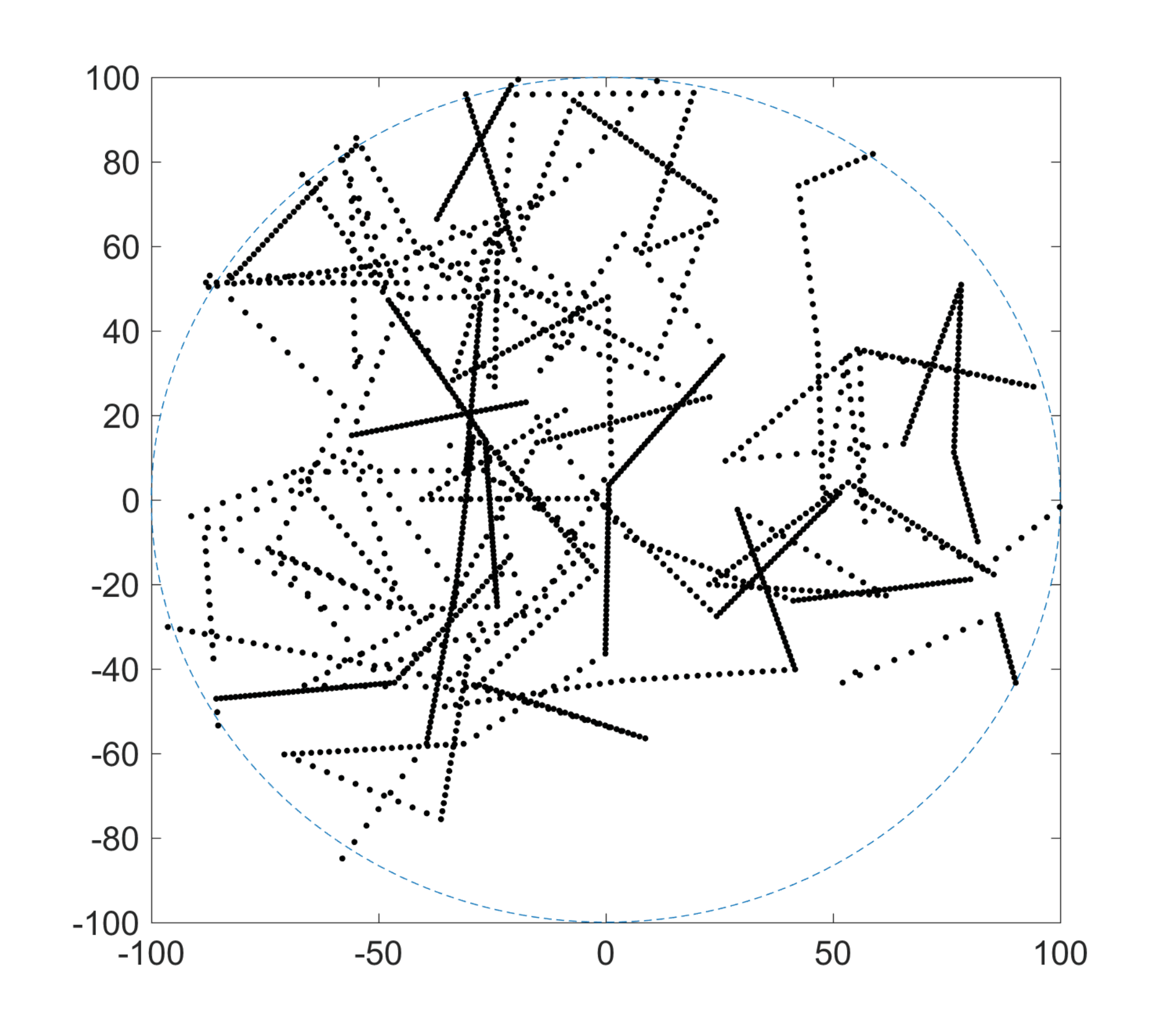}}\hfil
\subfigure[Trails with the RWP model]{
\label{fig:Trails-c}
\includegraphics[width=5.8cm]{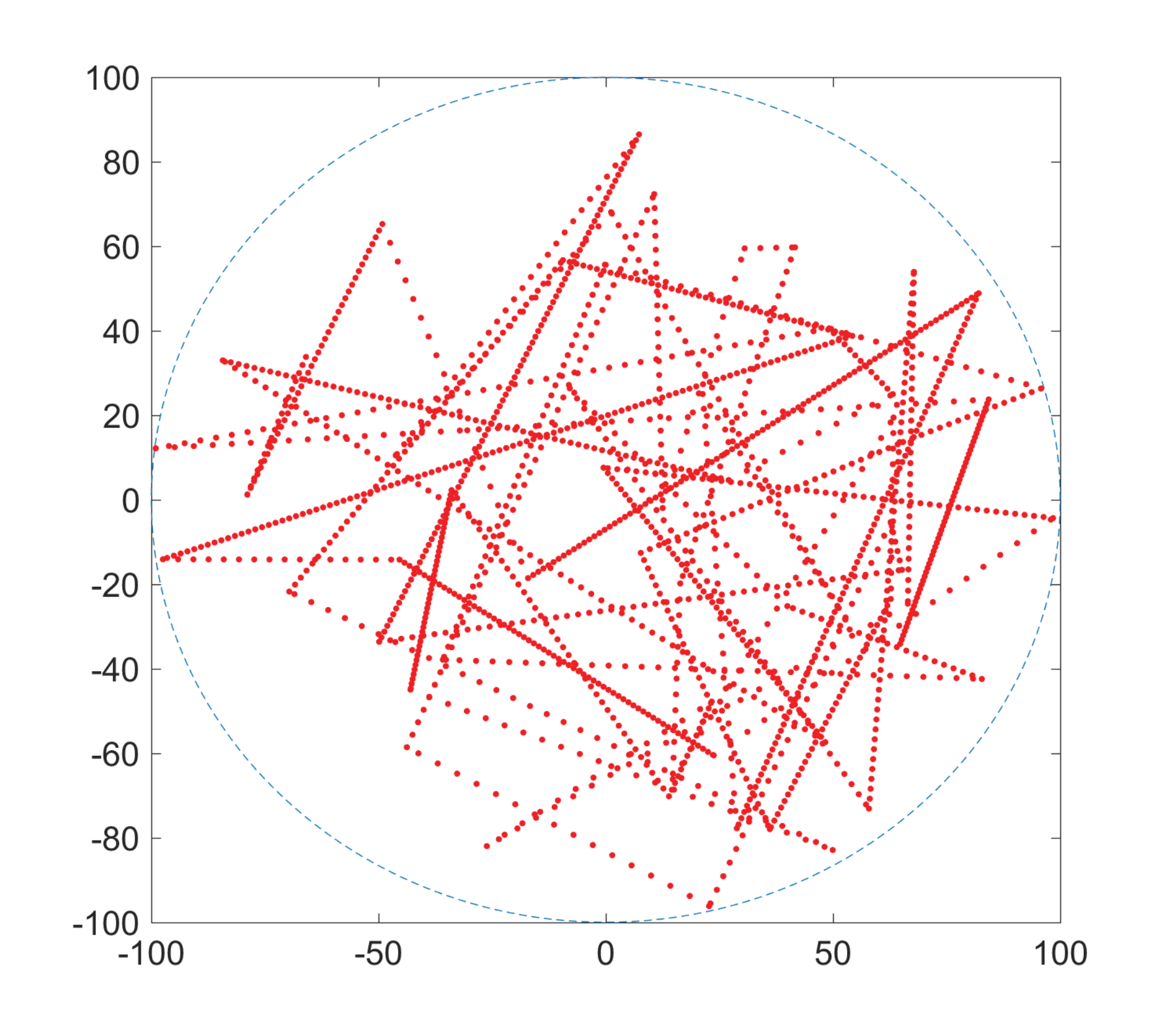}}\hfil
\caption{Trails of an infected individual moves in a circular region following the RD model, RWK model and RWP model}
\label{fig:Trails}
\end{figure*}

 \begin{proof}
According to the definition of the mean value of a variable, we have the mean of the virus volume generated by minor infected individuals as follows,
{\small
\begin{equation}
\begin{aligned}
\mu_{I_{N-1}} = & \mathbb{E}\left[I_{N-1} \mid R_{1}, V_{1}\right]\\
\stackrel{(a)}{=}&(N-1) \mathbb{E}\left[V_{i} \cdot R_{i}^{-\eta} \mid R_{1}, V_{1}\right]\\
= &(N-1) \int_{V_{m}}^{V_{M}}   \int_{r_{1}}^{D}  V_{i} \cdot  r_{i}^{-\eta}   f_{R_{i}}  \left(r_{i} \mid r_{1}\right) {f_V}(V_{i}) \mathrm{d} r_{i}  \mathrm{d} V_{i}\\
\stackrel{(b)}{=}&(N-1) \int_{V_{m}}^{V_{M}} V_{i} {f_V}(V_{i}) \mathrm{d} V_{i} \cdot \int_{r_{1}}^{D}     r_{i}^{-\eta} f_{R_{i}}  \left(r_{i} \mid r_{1}\right) \mathrm{d} r_{i},
\label{eq:mu}
\end{aligned}
\end{equation}
}

\noindent where $(a)$ follows the conditional i.i.d. nature of the distances \{$R_{i}$\} and the i.i.d. nature of the virus volumes \{$V_{i}$\}, $(b)$ follows the independence between \{$R_{i}$\} and \{$V_{i}$\}.
Substituting (\ref{eq:fv}) and (\ref{eq:fRi_s}) into Eq.~\eqref{eq:mu}, we have
{\small
\begin{equation}
\begin{aligned}
\mu_{I_{N-1}}
=&  (N-1) \int_{V_{m}}^{V_{M}}  \frac{ V_{i} } {V_{M}-V_{m}} \mathrm{d} V_{i} \int_{r_{1}}^{D} \frac{ 2 r_{i}^{1-\eta}} {D^{2}-r_{1}^{2}} \mathrm{d} r_{i}\\
=& \frac{(N-1) (V_{m}+V_{M}) (D^{2-\eta}-u_{1}^{2-\eta})}{(2-\eta)\left(D^{2}-u_{1}^{2}\right)}.
\end{aligned}
\end{equation}
}

\end{proof}

\begin{lemma}
When the infected individuals are static, the variance of virus volume (excluding virus volume generated by the dominant infected individual) conditioned on the distance between the susceptible individual and the dominant infected individual $U_{1}$ is
{\small
\begin{equation}
\begin{aligned}
{\sigma_{I_{N-1}}}
&=(N-1)\left[ \frac{ (V_{m}^2+V_{m}V_{M}+ V_{M}^2) (D^{2-2\eta}-r_{1}^{2-2\eta})}{3(1-\eta)\left(D^{2}-r_{1}^{2}\right)} \right.\\&\qquad\qquad\qquad\qquad\left.-   \frac{(V_{m}+ V_{M})^2 (D^{2-\eta}-r_{1}^{2-\eta})^2} {(2-\eta)^2 \left(D^{2}-r_{1}^{2}\right)^2}  \right].
\label{eq:VIN1_s}
\end{aligned}
\end{equation}
}

\label{lemma:VIN1_s}
\end{lemma}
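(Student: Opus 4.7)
The plan is to apply the identity $\text{Var}(X) = \mathbb{E}[X^2] - (\mathbb{E}[X])^2$ to the aggregate $I_{N-1} = \sum_{i=2}^{N} V_i R_i^{-\eta}$, conditioned on the dominant distance $R_1 = r_1$. Given $R_1$, the $N-1$ minor distances $\{R_i\}_{i\ge 2}$ are conditionally i.i.d.\ with density (\ref{eq:fRi_s}), and the virus volumes $\{V_i\}$ are i.i.d.\ with density (\ref{eq:fv}) and independent of all $R_j$. Consequently the summands $V_i R_i^{-\eta}$ are themselves conditionally i.i.d., so the conditional variance factorises as $\sigma_{I_{N-1}} = (N-1)\,\text{Var}(V_i R_i^{-\eta} \mid R_1)$, collapsing the problem to a single representative term.

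The single-term variance decomposes further via the independence of $V_i$ and $R_i$: $\mathbb{E}[(V_i R_i^{-\eta})^2 \mid R_1] = \mathbb{E}[V_i^2]\,\mathbb{E}[R_i^{-2\eta} \mid R_1]$ and $\mathbb{E}[V_i R_i^{-\eta} \mid R_1] = \mathbb{E}[V_i]\,\mathbb{E}[R_i^{-\eta} \mid R_1]$, so only four scalar quantities remain to be evaluated. The volume moments are elementary integrals of the uniform density (\ref{eq:fv}), yielding $\mathbb{E}[V_i] = (V_m + V_M)/2$ and $\mathbb{E}[V_i^2] = (V_m^2 + V_m V_M + V_M^2)/3$. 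The first distance moment $\mathbb{E}[R_i^{-\eta} \mid R_1]$ has already been computed inside the proof of Lemma~\ref{lemma:EIN1_s} and can simply be quoted. The only genuinely new integration is $\mathbb{E}[R_i^{-2\eta} \mid R_1] = \int_{r_1}^{D} \frac{2\, r^{1-2\eta}}{D^{2} - r_1^{2}}\, dr = \frac{D^{2-2\eta} - r_1^{2-2\eta}}{(1-\eta)(D^{2} - r_1^{2})}$, a direct application of the antiderivative of a power function.

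Assembling the four pieces into $\mathbb{E}[V_i^2]\,\mathbb{E}[R_i^{-2\eta}\mid R_1] - (\mathbb{E}[V_i])^2\,(\mathbb{E}[R_i^{-\eta}\mid R_1])^2$ and multiplying by $N-1$ would then reproduce (\ref{eq:VIN1_s}) term by term, completing the argument. I do not anticipate a serious obstacle here; the derivation is essentially bookkeeping once the conditional-independence factorisation is in place, and structurally it mirrors the mean calculation in Lemma~\ref{lemma:EIN1_s}. The only subtle point worth flagging is that the power-function antiderivatives implicitly require $\eta \notin \{1,2\}$; at the boundary value $\eta = 2$ of the stated range one would replace the offending factor by its logarithmic limit, but the same caveat already applies to Lemma~\ref{lemma:EIN1_s} and is therefore inherited rather than introduced here.
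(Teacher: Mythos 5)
Your proposal is correct and follows essentially the same route as the paper: both compute $\sigma_{I_{N-1}}=(N-1)\bigl[\mathbb{E}[(V_iR_i^{-\eta})^2\mid R_1]-(\mathbb{E}[V_iR_i^{-\eta}\mid R_1])^2\bigr]$ using the conditional i.i.d.\ structure, factor the $V$- and $R$-moments by independence, and evaluate the same four elementary integrals against the densities \eqref{eq:fv} and \eqref{eq:fRi_s}, arriving at \eqref{eq:VIN1_s}. Your explicit caveat that the power-function antiderivatives require $\eta\notin\{1,2\}$ is a point the paper leaves implicit, but it does not change the argument.
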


\begin{proof}
From the definition of conditional variance, we can calculate the conditional variance of virus volume at the susceptible individual excluding the virus volume from the dominant infected individual as follows,

{\small
\begin{equation}\label{eq13}
\begin{aligned}
{\sigma_{I_{N-1}}}&=\operatorname{Var}\left[I_{N-1} \mid R_{1}, V_{1}\right] \\
&=(N-1)\bigg[\int_{V_{m}}^{V_{M}}  \int_{r_{1}}^{D}  {(V_{i} r_{i}^{- \eta})}^2 {f_V}(V_{i}) f_{R_{i}}\left(r_{i} \mid  r_{1}\right) \mathrm{d} r_{i}\mathrm{d} V_{i}  \\
&\quad\quad\quad-\left( \int_{V_{m}}^{V_{M}}   \int_{r_{1}}^{D}  V_{i} \cdot  r_{i}^{-\eta} {f_V}(V_{i})  f_{R_{i}}  \left(r_{i} \mid r_{1}\right) \mathrm{d} r_{i}  \mathrm{d} V_{i}  \right)^{2}\bigg].
\end{aligned}
\end{equation}
}

Substituting (\ref{eq:fv}) and (\ref{eq:fRi_s}) into (\ref{eq13}), we derive
{\small
\begin{equation}
\begin{aligned}
{\sigma_{I_{N-1}}}=&(N-1) \bigg[   \int_{V_{m}}^{V_{M}}  \frac{ {V_{i}}^2 } {V_{M}-V_{m}} \mathrm{d} V_{i}     \int_{r_{1}}^{D} \frac{ 2 r_{i}^{1-2\eta}} {D^{2}-r_{1}^{2}} \mathrm{d} r_{i}\\
&\quad\quad\quad\quad - \left( \int_{V_{m}}^{V_{M}}  \frac{ {V_{i}} } {V_{M}-V_{m}} \mathrm{d} V_{i}  \int_{r_{1}}^{D} \frac{ 2 r_{i}^{1-\eta}} {D^{2}-r_{1}^{2}} \mathrm{d} r_{i} \right)^2  \bigg]\\
= &(N-1)\left[ \frac{ (V_{m}^2+V_{m}V_{M}+ V_{M}^2) (D^{2-2\eta}-r_{1}^{2-2\eta})}{3(1-\eta)\left(D^{2}-r_{1}^{2}\right)} \right.\\&\quad\quad\quad\quad\left.-   \frac{(V_{m}+ V_{M})^2 (D^{2-\eta}-r_{1}^{2-\eta})^2} {(2-\eta)^2 \left(D^{2}-r_{1}^{2}\right)^2}  \right].\\
\end{aligned}
\end{equation}
}
\end{proof}

We then derive the infectious probability of static infected individuals as follows.
\begin{theorem}
\label{theorem:Pinf_s}
When the infected individuals are static, the infectious probability of a susceptible individual can be expressed by
\begin{equation}
\begin{aligned}
\mathbb{P}_\text{inf} \approx   \int_{V_{m}}^{V_{M}} \int_{0}^{D}Q \left(\frac{V_\text{th} - V_{1}\cdot r_{1}^{-\alpha}-\mu_{I_{N-1}}}{\sigma_{I_{N-1}}}\right)\\
 \times \frac{2Nr_{1}(D^2-r_{1}^2)^{N-1}}{D^{2N} (V_{M}-V_{m})} \mathrm{d} r_{1} \mathrm{d} V_{1}.
\end{aligned}
\end{equation}
\end{theorem}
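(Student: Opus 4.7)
The plan is to start from the definition $\mathbb{P}_\text{inf} = P(I_\text{inf} \geq V_\text{th})$, split off the contribution of the dominant (nearest) infected individual from the remaining $N-1$ minor individuals, apply a Gaussian approximation to the aggregated residual, and then marginalize the conditioning variables using the density of $R_1$ from Lemma \ref{lemma:FR1_s} together with the uniform density of $V_1$ in \eqref{eq:fv}.

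First I would write $I_\text{inf} = V_1 R_1^{-\eta} + I_{N-1}$ with $I_{N-1} := \sum_{i=2}^{N} V_i R_i^{-\eta}$, so that conditioning on $(R_1,V_1) = (r_1,V_1)$ recasts the event $\{I_\text{inf} \geq V_\text{th}\}$ as $\{I_{N-1} \geq V_\text{th} - V_1 r_1^{-\eta}\}$. Given $R_1 = r_1$, the remaining distances $\{R_i\}_{i\geq 2}$ are i.i.d.\ with the truncated PDF in \eqref{eq:fRi_s}, and the virus volumes $\{V_i\}$ are i.i.d.\ uniform on $[V_m,V_M]$ independently of the distances; hence the $N-1$ summands of $I_{N-1}$ are conditionally i.i.d.\ with finite first and second moments already computed in Lemmas \ref{lemma:EIN1_s} and \ref{lemma:VIN1_s}.

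Next, following the strategy previewed before the theorem and the argument used in \cite{Chetlur:2017TCOM}, I would approximate the conditional distribution of $I_{N-1}$ by a Gaussian with mean $\mu_{I_{N-1}}$ and dispersion $\sigma_{I_{N-1}}$, which yields
\begin{equation*}
P\bigl(I_{N-1} \geq V_\text{th} - V_1 r_1^{-\eta} \,\big|\, R_1 = r_1, V_1\bigr) \approx Q\!\left(\frac{V_\text{th} - V_1 r_1^{-\eta} - \mu_{I_{N-1}}}{\sigma_{I_{N-1}}}\right).
\end{equation*}
Finally, I would remove the conditioning by integrating against the joint density of $(R_1,V_1)$, which factors because $V_1$ is independent of the distances. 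Substituting \eqref{eq:fR1} and \eqref{eq:fv} gives the kernel $\frac{2N r_1 (D^2 - r_1^2)^{N-1}}{D^{2N}(V_M - V_m)}$ and the double integral in the claim (modulo the $\alpha$/$\eta$ typographical swap in the exponent).

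The main obstacle is the Gaussian approximation itself, because its justification is essentially a CLT argument that is only asymptotic in $N$ and because the summands $V_i R_i^{-\eta}$ are noticeably skewed when $r_1$ is small (the tail near $r_i = r_1$ inflates the variance, as visible in Lemma \ref{lemma:VIN1_s}). A more rigorous version would invoke a Berry--Esseen bound, which in turn requires verifying that the third absolute moment of $V_i R_i^{-\eta}$ is finite for the admissible range of $\eta \in [2,7]$ and for $V_m > 0$; this is true here because $R_i$ is bounded below by $r_1 > 0$ almost surely. Importantly, isolating the dominant term $V_1 R_1^{-\eta}$ before applying the approximation is what makes the procedure workable at moderate $N$, since it strips the most variable contribution out of the residual and leaves a sum whose empirical distribution is better behaved than that of the raw $I_\text{inf}$.
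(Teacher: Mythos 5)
Your proposal follows essentially the same route as the paper's own proof: condition on the dominant pair $(R_1,V_1)$, recast the event as $\{I_{N-1} \geq V_\text{th} - V_1 r_1^{-\eta}\}$, apply the CLT-based Gaussian approximation to the conditionally i.i.d.\ minor contributions using the moments from Lemmas \ref{lemma:EIN1_s} and \ref{lemma:VIN1_s}, and then marginalize against $f_{R_1}$ from Lemma \ref{lemma:FR1_s} and the uniform density of $V_1$. Your added remarks on Berry--Esseen rigor and the $\alpha$/$\eta$ typographical slip go beyond the paper's treatment but do not change the argument, which is correct.
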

where $Q(\cdot)$ is the $Q$ function with the expression $Q(x)=\int_{x}^{+\infty} \frac{1}{\sqrt{2 \pi}} \exp \left(-\frac{1}{2} t^{2}\right) d t$.

\begin{proof}
The infectious probability of a susceptible individual can be computed by
\begin{equation}
\begin{aligned}
\mathbb{P}_\text{inf} =  \int_{V_{m}}^{V_{M}} \int_{0}^{D} P\left(I_\text{inf}>V_\text{th} \mid  R_{1}, V_{1}\right) & f_{R_{1}}( r_{1})  {f_V}(V_{1}) \mathrm{d} r_{1}  \mathrm{d} V_{1}.
\label{eq:Pinf1}
\end{aligned}
\end{equation}
As shown in (\ref{eq:Iinf}), when the virus volume generated by a infected individual $V_{i}$ is assumed to be uniformly distributed,  $I_\text{inf}$ becomes the sum of i.i.d. random variables.
When the central limited theorem (CLT) is applied, the infectious probability can be computed by
\begin{equation}
\begin{aligned}
P\left(I_\text{inf}>V_\text{th} \mid  R_{1}, V_{1} \right)
=& P\left(\sum_{i \in \Phi } V_{i} \cdot r_{i}^{-\eta} >V_\text{th} \mid  R_{1}, V_{1}\right) \\
=& P\left( I_{N-1} >V_\text{th} - V_{1} \cdot r_{1}^{-\eta} \right) \\
\approx & Q\left(\frac{V_\text{th}- V_{1} \cdot r_{1}^{-\eta}-\mu_{I_{N-1}}}{\sigma_{I_{N-1}}}\right).
\label{eq:Pinf2}
\end{aligned}
\end{equation}

Substituting (\ref{eq:fv}), (\ref{eq:fR1}) and (\ref{eq:Pinf2}) into (\ref{eq:Pinf1}), we get the expression of $\mathbb{P}_\text{inf}$ as given above.
\end{proof}

With the expression of $\mathbb{P}_\text{inf}$, we can calculate the total risk $\mathbb{R}_\text{total}$ of a susceptible individual becoming infected after duration $T$ in the network by substituting $\mathbb{P}_\text{inf}$ into Eq.~\eqref{eq:Rtotal}.

\subsection{Effect of Mobility on Infectious Probability}\label{subsec:mobi-prob}
Considering the scenarios that infected individuals move according to three different mobility models, we further analyze the infectious probability.

\begin{figure*}[ht]
\normalsize
\setcounter{mytempeqncnt}{\value{equation}}
\setcounter{equation}{18}
\begin{scriptsize}
\begin{equation}
f_{L}(l)=\left\{\begin{array}{l}
 \displaystyle  \frac{2 l}{\pi D^{2}} \bigg[\arctan \left(\frac{l}{\sqrt{4 W^2-l^2}}\right)-\arctan\left(\frac{l^2-3 W^2}{\sqrt{10 l^2 W^2 -l^4 -9 W^4}}\right)+4 \arccos \left(\frac{l}{4 W}+\frac{3 W}{4 l}\right)\\
  \displaystyle \qquad\qquad\qquad\qquad\qquad\qquad\qquad\qquad\qquad\qquad\qquad\qquad\qquad\qquad\qquad\qquad\qquad\qquad\qquad\qquad\quad -\arccos\left(\frac{l}{2 W}\right)\bigg], \ 0 \leq l < W, W \leq z < D \\
 \qquad \qquad\qquad  \\
 \displaystyle   \displaystyle \frac{2 l}{  D^2}, \qquad\qquad\qquad\qquad\qquad\qquad\qquad\qquad\qquad\qquad\qquad\qquad\qquad\qquad\qquad\qquad\qquad\qquad\qquad\qquad\qquad\qquad\qquad\qquad \ W \leq l \leq D,W \leq z < D\\
 \qquad \qquad\qquad  \\
 \displaystyle
 \frac{l}{D^2}+ \frac{2l}{\pi D^2} \left(\arccos \left(\frac{l}{2W}\right)-\arctan \left(\frac{1}{\sqrt{4 W^2-l^2}} \right) \right)
 , \qquad\qquad\qquad\qquad\qquad\qquad\qquad\qquad\qquad\qquad\qquad\quad 0 \leq l < 2W , 0 \leq z < W \\
 \qquad \qquad\qquad  \\
 \displaystyle 0
, \qquad\qquad\qquad\qquad\qquad\qquad\qquad\qquad\qquad\qquad\qquad\qquad\qquad\qquad\qquad\qquad\qquad\qquad\qquad\qquad\qquad\qquad\qquad\qquad\qquad 2W \leq l \leq D, 0 \leq z < W
\end{array}\right.
\label{eq:kpdf}
\end{equation}
\end{scriptsize}

\begin{scriptsize}
\begin{equation}
F_{L}(l)=\left\{\begin{array}{l}
 \displaystyle   \int_{0}^{l}
 \frac{2 r}{\pi D^{2}} \bigg[\arctan \left(\frac{r}{\sqrt{4 W^2-r^2}}\right)-\arctan\left(\frac{r^2-3 W^2}{\sqrt{10 r^2 W^2 -r^4 -9 W^4}}\right)+4 \arccos \left(\frac{r}{4 W}+\frac{3 W}{4 r}\right)\\
  \displaystyle \qquad\qquad\qquad\qquad\qquad\qquad\qquad\qquad\qquad\qquad\qquad\qquad\qquad\qquad\qquad\qquad\qquad\qquad\qquad\qquad -\arccos\left(\frac{r}{2 W}\right)\bigg] dr , \  0 \leq l < W, W \leq z < D  \\
\qquad \qquad\qquad  \\
 \displaystyle  \frac{l ^ 2}{  D^2}, \qquad\qquad\qquad\qquad\qquad\qquad\qquad\qquad\qquad\qquad\qquad\qquad\qquad\qquad\qquad\qquad\qquad\qquad\qquad\qquad\qquad\qquad\qquad\qquad \ W \leq l \leq D, W \leq z < D \\
 \qquad \qquad\qquad  \\
\displaystyle
\frac{1}{2 \pi  D^2} \bigg[\pi  l^2 -2 l \sqrt{4 W^2-l^2}-2 l^2 \arctan \left(\frac{l}{\sqrt{4 W^2-l^2}}\right)+4 W^2 \arctan \left(\frac{l}{\sqrt{4 W^2-l^2}}\right) +2 l^2 \arccos \left(\frac{l}{2 W}\right)\\
\qquad\qquad\qquad \qquad\qquad\qquad\qquad\qquad\qquad\qquad\qquad\qquad\qquad\qquad\qquad\qquad\qquad\qquad\qquad\quad
 \displaystyle +4 W^2 \arcsin \left(\frac{l}{2 W}\right) \bigg] , \  0 \leq l < 2W,  0 \leq z < W \\
 \displaystyle  \displaystyle  \frac{l ^ 2}{  D^2},\qquad\qquad\qquad\qquad\qquad\qquad\qquad\qquad\qquad\qquad\qquad\qquad\qquad\qquad\qquad\qquad\qquad\qquad\qquad\qquad\qquad\qquad\qquad\qquad \ 2W \leq l \leq D, 0 \leq z < W
\end{array}\right.
\label{eq:kcdf}
\end{equation}
\end{scriptsize}
\setcounter{equation}{20}
\hrulefill
\vspace*{4pt}
\end{figure*}

\subsubsection{RD Model}\label{subsubsec:RD}

Fig.~\ref{fig:Trails-a} presents the trails of an infected individual walking within a circular area with the radius of 100 meters ($\rm{m}$) following the RD model, where the moving time $T$ is 2000 seconds ($\rm{s}$), the pause time is 0.1 $\rm{s}$ and the moving speed varies from 1 meter/second ($\rm{m/s}$) to 5 $\rm{m/s}$. The distance between this infected individual to the susceptible individual at the center of this circular region is a random variable following the uniform distribution.  The PDF of this random variable is the same as that in Eq.~\eqref{eq:fr}.

Therefore, when an infected individual moves following the RD model, the mobility does not impact the infectious probability. This result differs from the results when the infected individual moves following the RWK model and the RWP model, whose moving trails are given in Fig.~\ref{fig:Trails-b} and Fig.~\ref{fig:Trails-c}, respectively. Next, we investigate the effect of mobility on the infectious probability when the infected individuals are moving with the RWK model.

\subsubsection{RWK Model}
We denote the distance between the $i$th infected individual and the reference susceptible individual by $l_{i}$, and the set of $N$ distances is $L_{N}$, and the radius of the circular network region is $D$. For simplicity of analysis, we assume the reference susceptible individual is located at the center of the circular network region. Although many studies also consider the RWK model~\cite{XJunfei:2014ICST,GZhenhua:2011ICC}, both PDF and CDF of distance $l$ are not provided. Therefore, we need to derive PDF and CDF of distance $l$ between an infected individual to the susceptible individual in the RWK model first.

\begin{lemma}
When infected individuals are moving according to the RWK model, PDF $f_{L}(l)$ and CDF $F_{L}(l)$ of the distance between the susceptible individual and the infected individual $l$ is given by \eqref{eq:kpdf} and (\eqref{eq:kcdf}, respectively.
\label{lemma:kpdf}
\end{lemma}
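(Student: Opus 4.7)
The plan is to derive the law of $L$ by conditioning on the current position of the infected individual (at distance $z$ from the susceptible individual placed at the origin) and then analysing the geometry of a single RWK step. Since each step has fixed length $W$ in an independent uniformly random direction $\phi \in [0, 2\pi)$, a starting point at $(z,0)$ is mapped to $(z+W\cos\phi,\, W\sin\phi)$, whose distance from the origin is $L = \sqrt{z^{2}+W^{2}+2zW\cos\phi}$. The starting distance $Z$ is uniformly distributed on the disk, hence $f_{Z}(z) = 2z/D^{2}$ as in Eq.~\eqref{eq:fr}, and the change of variables from the uniform $\phi$ yields a conditional density proportional to $l\bigl[(l^{2}-(z-W)^{2})((z+W)^{2}-l^{2})\bigr]^{-1/2}$, supported on $[\,|z-W|,\,z+W\,]$.

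Next I would partition the $(l,z)$-plane exactly as in the statement and compute the admissible angular measure in each piece. Two geometric effects have to be controlled: (i) whether the next position would leave the disk of radius $D$, and (ii) whether the origin lies inside the disk of radius $W$ around the starting point, which happens precisely when $z<W$ and is what allows $L$ to take values as small as $0$. When $W\le z<D$ and $W\le l\le D$ we are in the ``deep-interior'' regime where neither effect fires, and the measure of consistent directions collapses to give the uniform-disk density $2l/D^{2}$, matching the second line of \eqref{eq:kpdf} and yielding the CDF $l^{2}/D^{2}$ in \eqref{eq:kcdf}. The impossibility piece $2W\le l\le D$ with $0\le z<W$ is immediate, since the single-step range $[\,W-z,\,W+z\,]\subset[0,2W)$ cannot reach $l\ge 2W$, so the density is $0$.

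The hard part will be the two remaining pieces, in which the circular arc at distance $l$ from the origin partially exits the disk of admissible next positions or partially crosses the network boundary. I would express the permitted set of directions as a union of sub-arcs of the step-circle, each delimited by an intersection with a circle of radius $l$ about the origin or with the boundary of radius $D$, and read off the arc-lengths as inverse-trigonometric functions of the side lengths via the law of cosines. This is what produces the $\arctan$ and $\arccos$ summands in \eqref{eq:kpdf}; the coefficient $4$ on one $\arccos$ term reflects the fact that a particular intersection contributes four symmetric sub-arcs, while the remaining terms correct for arcs that would otherwise be counted with the wrong orientation. The CDF expressions in \eqref{eq:kcdf} then follow by direct integration of these inverse-trigonometric densities, with the antiderivatives being standard but tedious. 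A useful sanity check along the way is to verify the limits $l\downarrow 0$, $l\uparrow D$, $z\downarrow 0$, $z\uparrow W$, and $z\uparrow D$ so that each piece meshes continuously with its neighbours.
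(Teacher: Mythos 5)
There is a genuine gap here: your model of a single RWK step is not the one the paper's target formulas are built on, and the difference changes the answer. You place all of the conditional mass on the step's \emph{endpoint}, i.e.\ on the circle of radius $W$ around the starting point, and convert the uniform angle $\phi$ into the conditional density $f_{L\mid Z}(l\mid z)=\frac{2l}{\pi\sqrt{4z^{2}W^{2}-(z^{2}+W^{2}-l^{2})^{2}}}$ on $[\,|z-W|,\,z+W\,]$. The paper's Appendix A instead treats the individual's position during the step as uniformly distributed over the whole disk of radius $W$ around the starting point, so the conditional law is an \emph{area} ratio, $F_{L\mid Z}(l\mid z)=S/(\pi W^{2})$ with $S$ the lens cut out by the circle of radius $l$ about the origin; differentiating that lens area gives $f_{L\mid Z}(l\mid z)=\frac{2l\,\theta_{2}}{\pi W^{2}}$ with $\theta_{2}=\arccos\bigl((l^{2}+z^{2}-W^{2})/(2lz)\bigr)$. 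These two conditional laws are incompatible: yours diverges like an inverse square root as $l\uparrow z+W$, while the paper's vanishes there. Consequently, after mixing over $f_{Z}(z)=2z/D^{2}$ they yield different marginals. Concretely, for $0\le l<W$ and $z\in[W,W+l]$ your density integrates in closed form (substitute $v=z^{2}$) to $\frac{2l}{\pi D^{2}}\bigl(\frac{\pi}{2}+\arcsin\frac{l}{2W}\bigr)$, which is not the first branch of \eqref{eq:kpdf}. So the step you wave at --- that the arc-length bookkeeping ``produces the $\arctan$ and $\arccos$ summands'' of \eqref{eq:kpdf}, including the coefficient $4$ --- is precisely what you would need to prove, and for your endpoint kernel it is false.

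Be careful also about what your partial agreement proves: the branch $2l/D^{2}$, and the zero branch for $l\ge 2W$, $z<W$, come out the same under \emph{any} isotropic step kernel, because convolving the uniform distribution on the disk of radius $D$ with an isotropic displacement leaves the density unchanged wherever boundary effects do not fire. Matching those two branches therefore cannot distinguish your endpoint model from the paper's swept-disk model, and it is only in the remaining branches that the lemma has content. To prove the lemma as stated you need to adopt the paper's area-ratio definition of $F_{L\mid Z}$, compute the lens area $S=W^{2}\theta_{1}+l^{2}\theta_{2}-Wz\sin\theta_{1}$ separately in the two regimes $W\le z\le D$ and $0\le z<W$, differentiate the joint law $F_{L\mid Z}(l\mid z)f_{Z}(z)$ in $l$, and then integrate over $z$ within each regime (the paper's branches of \eqref{eq:kpdf} and \eqref{eq:kcdf} are indexed by the $z$-regime as well as by $l$, another structural feature your proposal does not reproduce).
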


\begin{proof} See Appendix A.

\end{proof}

Based on PDF and CDF of the distance between the susceptible individual and the infected individual $l$, we have the following lemmas.
\begin{lemma}
When infected individuals are moving according to the RWK model, the distance $L_{1}$ between the susceptible individual and the dominant infected individual is given by
\begin{equation}
f_{L_{1}}\left(l_{1}\right)=N\left(1-F_{L}(l_{1})\right)^{N-1} f_{L}(l_{1}),
\label{eq:fL1}
\end{equation}
where $f_{L}(l)$ and $F_{L}(l)$ are given by (\ref{eq:kpdf}) and (\ref{eq:kcdf}), respectively.
\label{lemma:FL1_s}
\end{lemma}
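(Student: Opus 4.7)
The plan is to follow exactly the template of Lemma~\ref{lemma:FR1_s} (the static case), since the claimed formula is nothing more than the density of the first order statistic of $N$ i.i.d.\ random variables. The only genuinely new input we need is Lemma~\ref{lemma:kpdf}, which has already furnished the marginal PDF $f_L$ and CDF $F_L$ of the distance from a single RWK-mobile infected individual to the reference susceptible individual at the origin.

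First I would fix a time instant and argue that the snapshot distances $\{L_i\}_{i=1}^N$ from the $N$ infected individuals to the central susceptible individual are i.i.d.\ with common CDF $F_L$: they are identically distributed because every infected individual obeys the same RWK dynamics on the same circular region, and they are independent because the RWK trajectories of distinct infected individuals were defined to be mutually independent in Section~\ref{subsec:mobility}. Hence the only input from the mobility model is the marginal law $F_L$, exactly as in the static case where the input was $F_R$.

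Next I would compute the CDF of $L_1 = \min\{L_1,\dots,L_N\}$ directly:
\begin{equation*}
F_{L_1}(l_1)=1-\mathbb{P}(L_1>l_1,\dots,L_N>l_1)\stackrel{(a)}{=}1-(1-F_L(l_1))^N,
\end{equation*}
where $(a)$ uses the independence and identical distribution just established. Differentiating with respect to $l_1$ then gives
\begin{equation*}
f_{L_1}(l_1)=N(1-F_L(l_1))^{N-1}f_L(l_1),
\end{equation*}
which is the claimed expression, with $f_L$ and $F_L$ substituted from \eqref{eq:kpdf} and \eqref{eq:kcdf}.

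The main obstacle is conceptual rather than computational: one must justify using the stationary marginal $F_L$ at a single snapshot for all $N$ infected users simultaneously. This is clean here because (i) Lemma~\ref{lemma:kpdf} already derives $F_L$ for a single mobile individual, and (ii) the RWK trajectories of different individuals are independent by assumption, so cross-user independence of the snapshot distances is inherited from independence of the trajectories. Once this is in hand, the remainder is the routine order-statistic calculation above, and no new integrals need to be evaluated beyond those already absorbed into $F_L$ and $f_L$.
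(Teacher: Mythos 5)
Your proposal is correct and follows essentially the same route as the paper: both compute the CDF of the minimum of the $N$ i.i.d.\ distances as $1-\left(1-F_{L}(l)\right)^{N}$ and then differentiate to obtain $f_{L_{1}}(l_{1})=N\left(1-F_{L}(l_{1})\right)^{N-1} f_{L}(l_{1})$. The only difference is that you spell out why the snapshot distances are i.i.d.\ (identical RWK dynamics plus independent trajectories), a point the paper simply asserts as ``the i.i.d.\ nature of the set of distances.''
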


\begin{proof}
Since the dominant infected individual is the closest infected individual to the susceptible individual, the conditional CDF of $L{_1}$ can be computed by the following expression,
\begin{equation}
\begin{aligned}
F_{L_{1}}(l_{1}) &=\mathbb{P}\left(l_{1}\leq l \right) =1-\mathbb{P}\left(\min \left\{L_{N}\right\}> l \right) \\
&=1-\mathbb{P}\left(L_{1}>l, L_{2}>l, \ldots, L_{N}>l \right) \\
& \stackrel{(a)}{=} 1-\left(1-F_{L}(l)\right)^{N},
\end{aligned}
\end{equation}
where $(a)$ follows the i.i.d. nature of the set of distances $L_{N}$. Differentiating the above expression with respect to $l$, PDF of $L{_1}$ is given by
\begin{equation}
\begin{split}
f_{L_{1}}\left(l_{1}\right)&=N\left(1-F_{L}\left(l_{1} \right)\right)^{N-1} f_{L}\left(l_{1} \right).
\label{eq:kpdf1}
\end{split}
\end{equation}
\end{proof}

For the reference susceptible individual situated at the origin, PDF of the distance between the susceptible individual and the minor infected individuals conditioned on the distance $L_{1}$ between the susceptible individual and the dominant infected individual is given by

{\small
\begin{equation}
f_{L_{i}}\left(l_{i} \mid l_{1}\right)=  f_{L}(l_{i}),  \quad l_{1} \leq l_{i} \leq D,
\label{eq:kpdfi}
\end{equation}}

\noindent where $f_{L}(l)$ is given by (\ref{eq:kpdf}).

\begin{lemma}
When the infected individuals are moving according to the RWK model, the mean of the virus volume (excluding the virus volume from the dominant infected individual) conditioned on the distance between the susceptible individual and the dominant infected individual $L_{1}$ is

{\small
\begin{equation}
\begin{aligned}
\mu_{I_{N-1}'} =\frac{ (N-1) (V_{m}+V_{M})}{2} \int_{l_{1}}^{D}    l_{i}^{-\eta}  f_{L}\left(l_{i} \right)   \mathrm{d} l_{i}
\label{eq:kEIN1_m}
\end{aligned}
\end{equation}
}
\label{lemma:kEIN1_m}
\end{lemma}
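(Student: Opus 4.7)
The plan is to mirror the proof of Lemma~\ref{lemma:EIN1_s} (the static case), substituting the RWK conditional distance density $f_{L_i}(l_i\mid l_1)=f_L(l_i)$ from \eqref{eq:kpdfi} in place of the uniform conditional density used there. The virus-volume density $f_V$ from \eqref{eq:fv} is unchanged, since mobility only affects the spatial distribution.

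First, I would start from the definition of the conditional mean,
\begin{equation*}
\mu_{I_{N-1}'} = \mathbb{E}\!\left[\,\sum_{i=2}^{N} V_i\, L_i^{-\eta} \;\Big|\; L_1, V_1\right].
\end{equation*}
Then I would invoke the conditional i.i.d.\ property of the minor-individual distances $\{L_i\}_{i\ge 2}$ together with the i.i.d.\ property of the virus volumes $\{V_i\}$ and their mutual independence to reduce this to
\begin{equation*}
\mu_{I_{N-1}'} = (N-1)\,\mathbb{E}[V_i]\,\mathbb{E}\!\left[L_i^{-\eta}\mid L_1\right],
\end{equation*}
exactly paralleling steps $(a)$ and $(b)$ in Eq.~\eqref{eq:mu}.

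Next, I would compute each factor separately. For the volume term, $V_i$ is uniform on $[V_m,V_M]$, so $\mathbb{E}[V_i]=\tfrac{V_m+V_M}{2}$. For the distance term, I would integrate against the conditional density \eqref{eq:kpdfi}, giving
\begin{equation*}
\mathbb{E}\!\left[L_i^{-\eta}\mid L_1=l_1\right] = \int_{l_1}^{D} l_i^{-\eta}\, f_L(l_i)\,\mathrm{d}l_i.
\end{equation*}
Multiplying these factors together with $(N-1)$ yields the claimed expression.

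The only subtlety, and the step I expect to be the main obstacle for a reader, is justifying the conditional density \eqref{eq:kpdfi}: unlike the static uniform case where conditioning on the minimum $R_1$ forces the remaining $R_i$ into $[r_1,D]$ with a normalized uniform-like density, here one must argue that conditioned on $L_1=l_1$ the remaining distances are i.i.d.\ with density $f_L$ truncated to $[l_1,D]$. Because $f_L$ already integrates to one over $[0,D]$, the ``truncation'' is simply restricting the domain of integration to $[l_1,D]$, which is already implicit in using $f_L(l_i)$ on $[l_1,D]$ as in \eqref{eq:kpdfi}. Once this is accepted, the rest of the derivation is a routine substitution identical in spirit to Lemma~\ref{lemma:EIN1_s}, with no closed-form simplification of the $l_i^{-\eta}f_L(l_i)$ integral attempted (which is why the statement leaves it in integral form).
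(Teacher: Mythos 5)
Your proposal is correct and follows essentially the same route as the paper's proof: the paper likewise expands the conditional mean, applies the conditional i.i.d.\ property of $\{L_i\}$ and the i.i.d.\ property of $\{V_i\}$ (its step $(a)$), factors the expectation using their independence (its step $(b)$) into $(N-1)\,\mathbb{E}[V_i]\int_{l_1}^{D} l_i^{-\eta} f_L(l_i)\,\mathrm{d}l_i$, and then substitutes \eqref{eq:fv} and \eqref{eq:kpdfi}. The normalization subtlety you flag is not addressed by the paper either; its proof simply takes \eqref{eq:kpdfi} as given, exactly as you do.
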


 \begin{proof} According to the definition of the mean value of a random variable, we have the expression of the mean of virus volume from minor infected individuals as

{\small
\begin{equation}
\begin{aligned}
\mu_{I_{N-1}'} =& \mathbb{E}\left[I_{N-1} \mid L_{1}, V_{1} \right]\\
\stackrel{(a)}{=}&(N-1) \mathbb{E}\left[V_{i} \cdot L_{i}^{-\eta} \mid L_{1}, V_{1}\right]\\
= &(N-1) \int_{V_{m}}^{V_{M}}   \int_{l_{1}}^{D}  V_{i} \cdot  l_{i}^{-\eta}   f_{L_{i}}  \left(l_{i} \mid l_{1}\right) {f_V}(V_{i}) \mathrm{d} l_{i}  \mathrm{d} V_{i}\\
\stackrel{(b)}{=}&(N-1) \int_{V_{m}}^{V_{M}} V_{i} {f_V}(V_{i}) \mathrm{d} V_{i} \cdot \int_{l_{1}}^{D}    l_{i}^{-\eta} f_{L_{i}}  \left(l_{i} \mid l_{1}\right) \mathrm{d} l_{i},
\label{eq:murwp}
\end{aligned}
\end{equation}}

\noindent where $(a)$ follows  the conditional i.i.d. nature of the distance \{$L_{i}$\} and the i.i.d. nature of the virus volume \{$V_{i}$\}, $(b)$ follows the independence between \{$L_{i}$\} and \{$V_{i}$\}.
Substituting (\ref{eq:fv}) and (\ref{eq:kpdfi}) into \eqref{eq:murwp}, we have

{\small
\begin{equation}
\begin{aligned}
\mu_{I_{N-1}'} = \frac{(N-1) (V_{m}+V_{M})}{2} \int_{l_{1}}^{D}    l_{i}^{-\eta} f_{L}\left(l_{i} \right)   \mathrm{d} l_{i}.
\end{aligned}
\end{equation}
}
 \end{proof}

\begin{lemma}

When the infected individuals are moving according to the RWK model, the variance of virus volume (excluding the virus volume from the dominant infected individual) conditioned on the distance between the susceptible individual and the dominant infected individual $L_{1}$ is

{\small
\begin{equation}
\begin{aligned}
{\sigma_{I_{N-1}'}}&=(N-1) \bigg[ \frac{ (V_{m}^2+V_{m}V_{M}+ V_{M}^2) }{3} \int_{r_{1}}^{D} l_{i}^{-2 \eta} f_{L}(l_{i}) \mathrm{d} r_{i} \\
&\qquad\qquad -  \frac{(V_{m}+ V_{M})^2 } {4} \left(\int_{r_{1}}^{D}l_{i}^{- \eta} f_{L}(l_{i}) \mathrm{d} r_{i} \right)^2 \bigg].
\label{eq:ksigma_m}
\end{aligned}
\end{equation}
}

\noindent where $f_{L}(l)$ is given by (\ref{eq:kpdf}).
\label{lemma:kVIN1_m}
\end{lemma}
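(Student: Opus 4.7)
The plan is to mirror the static-case proof of Lemma 3, with the uniform distance density on the disk replaced by the mobility-induced PDF $f_L$ established in Lemma 4. Starting from the definition $\sigma_{I_{N-1}'} = \operatorname{Var}[I_{N-1} \mid L_1, V_1]$ and writing $I_{N-1} = \sum_{i=2}^{N} V_i L_i^{-\eta}$, I would first invoke the conditional i.i.d. nature of the distances $\{L_i\}$ and the i.i.d. nature of the virus volumes $\{V_i\}$ (together with their mutual independence) to collapse the variance of a sum of $N-1$ independent terms into
\[
(N-1)\bigl( \mathbb{E}[(V_i L_i^{-\eta})^2 \mid L_1] - (\mathbb{E}[V_i L_i^{-\eta} \mid L_1])^2 \bigr).
\]

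Next, I would separate the two expectations into products over $V_i$ and $L_i$ using independence. The $V_i$ integrals are elementary: substituting the uniform density $f_V$ from \eqref{eq:fv} yields $\mathbb{E}[V_i] = (V_m + V_M)/2$ and $\mathbb{E}[V_i^2] = (V_m^2 + V_m V_M + V_M^2)/3$. For the distance expectations, I would substitute the conditional density $f_{L_i}(l_i \mid l_1) = f_L(l_i)$ from \eqref{eq:kpdfi}, giving $\mathbb{E}[L_i^{-\eta} \mid L_1] = \int_{l_1}^{D} l_i^{-\eta} f_L(l_i)\, dl_i$ (already reused from Lemma 4) and analogously $\mathbb{E}[L_i^{-2\eta} \mid L_1] = \int_{l_1}^{D} l_i^{-2\eta} f_L(l_i)\, dl_i$. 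Plugging these four pieces back in and collecting terms reproduces the claimed expression \eqref{eq:ksigma_m} verbatim.

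The main obstacle, in contrast to the static case, is that $f_L$ from Lemma 4 is a four-branch piecewise function with $\arctan$ and $\arccos$ terms, so the two distance integrals do not admit a clean elementary closed form. I would therefore not attempt to evaluate them symbolically; instead I would leave them as definite integrals (exactly as the lemma statement already does) and flag that they are to be computed numerically when the formula is used downstream. Beyond this, the derivation is structurally identical to Lemma 3, so no additional probabilistic ingredient is needed.
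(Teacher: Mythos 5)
Your proposal is correct and follows essentially the same route as the paper's proof: both start from the definition of the conditional variance, use the conditional i.i.d.\ structure of $\{L_i\}$ and $\{V_i\}$ to reduce the variance of the sum to $(N-1)$ times a single-term variance, substitute $f_V$ from \eqref{eq:fv} and $f_{L_i}(l_i \mid l_1)=f_L(l_i)$ from \eqref{eq:kpdfi}, evaluate the elementary $V_i$ integrals to get the coefficients $(V_m^2+V_mV_M+V_M^2)/3$ and $(V_m+V_M)^2/4$, and leave the distance integrals against $f_L$ unevaluated. Your closing remark about not attempting a closed form for the piecewise $f_L$ integrals also matches the paper's treatment.
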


\begin{proof}
The conditional variance of virus volume at the susceptible individual  excluding the virus volume from the dominant infected individual can be calculated by

{\small
\begin{equation}
\begin{aligned}
{\sigma_{I_{N-1}'}}&=\operatorname{Var}\left[I_{N-1} \mid L_{1}, V_{1}\right] \\
&=(N-1) \bigg[\int_{V_{m}}^{V_{M}}  \int_{l_{1}}^{D}  {(V_{i} l_{i}^{- \eta})}^2 {f_V}(V_{i}) f_{L_{i}}\left(l_{i} \mid  l_{1}\right) \mathrm{d} l_{i}\mathrm{d} V_{i}  \\
&\quad\quad\quad-\left( \int_{V_{m}}^{V_{M}}   \int_{l_{1}}^{D}  V_{i} \cdot  l_{i}^{-\eta} {f_V}(V_{i})  f_{L_{i}}  \left(l_{i} \mid l_{1}\right) \mathrm{d} l_{i}  \mathrm{d} V_{i}  \right)^{2}\bigg].
\label{eq:sigma}
\end{aligned}
\end{equation}
}

After substituting (\ref{eq:fv}) and (\ref{eq:kpdfi}) into \eqref{eq:sigma}, we derive

{\small
\begin{equation}
\begin{aligned}
{\sigma_{I_{N-1}'}}=&(N-1) \bigg[     \int_{V_{m}}^{V_{M}}  \frac{ {V_{i}}^2 } {V_{M}-V_{m}} \mathrm{d} V_{i}   \int_{r_{1}}^{D} l_{i}^{-2 \eta} f_{L}(l_{i})   \mathrm{d} r_{i}\\
& \qquad\qquad  - \bigg( \int_{V_{m}}^{V_{M}}  \frac{ {V_{i}} } {V_{M}-V_{m}} \mathrm{d} V_{i}  \int_{r_{1}}^{D} l_{i}^{- \eta} f_{L}(l_{i}) \mathrm{d} r_{i} \bigg)^2  \bigg]\\
= &(N-1) \bigg[ \frac{ (V_{m}^2+V_{m}V_{M}+ V_{M}^2) }{3} \int_{r_{1}}^{D} l_{i}^{-2 \eta} f_{L}(l_{i}) \mathrm{d} r_{i} \\
&\qquad\qquad -  \frac{(V_{m}+ V_{M})^2 } {4} \left(\int_{r_{1}}^{D}l_{i}^{- \eta} f_{L}(l_{i}) \mathrm{d} r_{i} \right)^2 \bigg].
\end{aligned}
\end{equation}
}

\end{proof}

Based on the above Lemmas, we next derive the infectious probability as follows.
\begin{theorem}
\label{theorem:kPinf_m}
When the infected individuals are moving according to the RWK model, the infectious probability of a susceptible individual is expressed by

\begin{equation}
\begin{aligned}
\mathbb{P}_\text{inf} \approx   \int_{V_{m}}^{V_{M}} \int_{0}^{D}Q \left(\frac{V_\text{th} - V_{1}\cdot l_{1}^{-\eta}-\mu_{I'_{N-1}}}{\sigma_{I'_{N-1}}}\right)\\
 \times \frac{N\left(1-F_{L}(l_{1})\right)^{N-1} f_{L}(l_{1})}{ (V_{M}-V_{m})} \mathrm{d} l_{1} \mathrm{d} V_{1},
\label{eq:kPinfF}
\end{aligned}
\end{equation}
where $Q(\cdot)$ is the $Q$ function with the expression $Q(x)=\int_{x}^{+\infty} \frac{1}{\sqrt{2 \pi}} \exp \left(-\frac{1}{2} t^{2}\right) dt$, $f_{L}(l)$  is given by (\ref{eq:kpdf}) and $F_{L}(l)$ is given by (\ref{eq:kcdf}).
\end{theorem}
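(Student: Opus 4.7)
The plan is to mirror the argument of Theorem~\ref{theorem:Pinf_s} but with all distance-related quantities replaced by their RWK-model counterparts. First I would condition the event $\{I_\text{inf} > V_\text{th}\}$ on the dominant pair $(L_{1},V_{1})$ and decompose the total infectious strength as
\begin{equation}
I_\text{inf} \;=\; V_{1}\cdot L_{1}^{-\eta} \;+\; I_{N-1},
\end{equation}
where $I_{N-1}=\sum_{i=2}^{N} V_{i}\cdot L_{i}^{-\eta}$ collects the contributions of the minor infected individuals. Using the law of total probability with the joint density $f_{L_{1}}(l_{1})f_{V}(V_{1})$, I obtain the outer double integral
\begin{equation}
\mathbb{P}_\text{inf} = \int_{V_{m}}^{V_{M}}\!\!\int_{0}^{D} \mathbb{P}\!\left(I_{N-1} > V_\text{th}-V_{1}\,l_{1}^{-\eta}\,\big|\,L_{1}=l_{1},V_{1}\right) f_{L_{1}}(l_{1})\, f_{V}(V_{1})\,\mathrm{d}l_{1}\,\mathrm{d}V_{1}.
\end{equation}

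Next I would invoke the central limit theorem on $I_{N-1}$, exactly as in the proof of Theorem~\ref{theorem:Pinf_s}: conditioned on $L_{1}=l_{1}$, the summands $V_{i}\cdot L_{i}^{-\eta}$ are i.i.d.\ (because the $L_{i}$'s are conditionally i.i.d.\ by \eqref{eq:kpdfi} and are independent of the $V_{i}$'s), so for sufficiently many infected individuals $N$ the sum is approximately Gaussian with conditional mean $\mu_{I'_{N-1}}$ given by Lemma~\ref{lemma:kEIN1_m} and conditional variance $\sigma_{I'_{N-1}}$ given by Lemma~\ref{lemma:kVIN1_m}. Applying the standard Gaussian tail formulation yields
\begin{equation}
\mathbb{P}\!\left(I_{N-1} > V_\text{th}-V_{1}\,l_{1}^{-\eta}\right) \;\approx\; Q\!\left(\frac{V_\text{th}-V_{1}\,l_{1}^{-\eta}-\mu_{I'_{N-1}}}{\sigma_{I'_{N-1}}}\right).
\end{equation}

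Finally I would substitute $f_{V}(V_{1})=1/(V_{M}-V_{m})$ from \eqref{eq:fv} and $f_{L_{1}}(l_{1})=N(1-F_{L}(l_{1}))^{N-1}f_{L}(l_{1})$ from Lemma~\ref{lemma:FL1_s} into the outer integral, which directly produces the claimed expression \eqref{eq:kPinfF}. The only real subtlety, and the step I expect to require care, is justifying the CLT approximation here: unlike the static case where $f_{R_{i}}(r_{i}\mid r_{1})$ is the simple ratio $2r_{i}/(D^{2}-r_{1}^{2})$, the RWK density $f_{L}$ in \eqref{eq:kpdf} is piecewise and has a more complicated analytic form, so one should note that the conditional i.i.d.\ structure (needed for CLT) still holds because the mobility model is applied independently to each infected individual, and the mean and variance integrals in Lemmas~\ref{lemma:kEIN1_m} and \ref{lemma:kVIN1_m} remain finite for $\eta>1$ on the bounded domain $[l_{1},D]$. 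Beyond that observation, the remaining manipulations are purely a substitution exercise analogous to the static-case proof.
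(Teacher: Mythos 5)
Your proposal is correct and follows essentially the same route as the paper's proof: conditioning on the dominant pair $(L_{1},V_{1})$, applying the CLT to the minor-individual sum $I_{N-1}$ with the mean and variance from Lemmas~\ref{lemma:kEIN1_m} and~\ref{lemma:kVIN1_m}, and substituting \eqref{eq:fv} and Lemma~\ref{lemma:FL1_s} into the outer integral. Your added remark on why the conditional i.i.d.\ structure and moment finiteness justify the Gaussian approximation is a point the paper glosses over, but it does not change the argument.
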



\begin{proof}By definition,  the infectious probability of a susceptible individual when  the infected individuals are moving according to the RWK model is given by
\begin{equation}
\begin{aligned}
\mathbb{P}_\text{inf} =  \int_{V_{m}}^{V_{M}} \int_{0}^{D} P\left(I_\text{inf}>V_\text{th} \mid  L_{1}, V_{1}\right) & f_{L_{1}}( l_{1})  {f_V}(V_{1}) \mathrm{d} l_{1}  \mathrm{d} V_{1}.
\label{eq:kPinf1}
\end{aligned}
\end{equation}

As stated earlier, $I_{N-1}$ is the sum of i.i.d. random variables.
Therefore, by applying CLT, the infectious probability can be computed by
CDF of a Gaussian random variable. We then have
\begin{equation}
\begin{aligned}
\label{eq:kPinf2}
P\left(I_\text{inf}>V_\text{th} \mid  L_{1}, V_{1} \right)
=& P\left(\sum_{i \in \Phi } V_{i} \cdot l_{i}^{-\eta} >V_\text{th} \mid  L_{1}, V_{1}\right) \\
=& P\left( I_{N-1} >V_\text{th} - V_{1} \cdot l_{1}^{-\eta} \right) \\
\approx & Q\left(\frac{V_\text{th}- V_{1} \cdot l_{1}^{-\eta}-\mu_{I'_{N-1}}}{\sigma_{I'_{N-1}}}\right).
\end{aligned}
\end{equation}
Substituting (\ref{eq:fv}), (\ref{eq:kpdf1}) and (\ref{eq:kPinf2}) into (\ref{eq:kPinf1}), we derive the expression of $\mathbb{P}_\text{inf}$ in (\ref{eq:kPinfF}).
\end{proof}

\subsubsection{RWP Model}
We denote the distance between the $i$th infected individual and the reference susceptible individual by $u_{i}$, and the set of $N$ distances is $U_{N}$, and the radius of the circular network region is $D$. For simplicity of analysis, we assume the reference susceptible individual is located at the center of the circular network region. There are a number of studies on the RWP model~\cite{Gong:2014TMC, Irio:2018TCOM} and PDF of the distance $u$ between an infected individual and the susceptible individual is given by~\cite{Hyytia:2006TMC} as follows,

\begin{equation}
\begin{aligned}
f_{U}(u)=& \frac{2\pi u h(u)}{2 \pi \int_{0}^{1} u \cdot h(u) d u}\\
=&\frac{45 u \left(1-u^{2}\right)}{64} \int_{0}^{\pi} \sqrt{1-u^{2} \cos ^{2} \phi} d \phi \\
\stackrel{(a)}{\approx} &\sum_{i=1}^{3} \frac{B_{i}}{D^{\beta_{i}+1}} u^{\beta_{i}}, 0 < u \le 1 ,
\end{aligned}
\end{equation}
where $h(u)=2\left(1-u^{2}\right) \int_{0}^{\pi} \sqrt{1-u^{2} \cos ^{2} \phi} d \phi$, $B_{i}=(1/73) \cdot [324,-420,96] $, $\beta_{i}=[1,3,5]$, and $(a)$ is derived because applying the exact pdf in complicated problems may be time consuming as the exact expression of $f(u)$ requires the numerical integration. This simplified PDF expression of $f(u)$, which was proved to be accurate, is provided by~\cite{Fernandez:2018TVT}.

\begin{lemma}
When infected individuals are moving according to the RWP model, the distance between the susceptible individual and the dominant infected individual $R_{1}$ is given by
\begin{equation}
f_{U_{1}}\left(u_{1}\right) = N \Bigg(\Big(1-\sum_{j=1}^{3}   \frac{C_{i}}{D^{\beta_{j}+1}} u^{\gamma_{j}}\Big)^{N-1}\Bigg) \left(\sum_{i=1}^{3} \frac{B_{i}}{D^{\beta_{i}+1}} u^{\beta_{i}} \right).
\label{eq:fU1}
\end{equation}
\label{lemma:FR1_m}
\end{lemma}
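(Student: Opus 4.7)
My plan is to mirror the proofs of Lemma 1 and Lemma 3, since the argument for the dominant (nearest) infected individual is structurally identical once one has the single-link distance PDF in hand. Let $U_1,\dots,U_N$ denote the i.i.d. distances from the susceptible individual to the $N$ infected individuals, each with PDF $f_U$ given by the simplified RWP expression stated just before the lemma.

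First I would obtain the marginal CDF $F_U$ of a single distance by integrating the approximate PDF term-by-term. Since $f_U(u)\approx\sum_{i=1}^{3}\frac{B_i}{D^{\beta_i+1}}u^{\beta_i}$ is a finite sum of monomials, a direct integration gives
\begin{equation*}
F_U(u)=\int_0^u f_U(t)\,dt\approx\sum_{j=1}^{3}\frac{C_j}{D^{\beta_j+1}}u^{\gamma_j},
\end{equation*}
with $\gamma_j=\beta_j+1$ and $C_j=B_j/(\beta_j+1)$. This identifies the constants $\{C_j,\gamma_j\}$ appearing in the statement.

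Next, as in Lemma 1 and Lemma 3, I would compute the CDF of the nearest-distance order statistic $U_1=\min\{U_1,\dots,U_N\}$ by the complementary-event argument:
\begin{equation*}
F_{U_1}(u_1)=1-\mathbb{P}(U_1>u_1,\dots,U_N>u_1)\stackrel{\text{i.i.d.}}{=}1-(1-F_U(u_1))^N.
\end{equation*}
Differentiating in $u_1$ using the chain rule then gives $f_{U_1}(u_1)=N(1-F_U(u_1))^{N-1}f_U(u_1)$, and substituting the two expansions above yields exactly the formula in the lemma statement.

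I do not expect any serious obstacle; the only delicate point is bookkeeping. Specifically, one must be careful that the CDF indices $(C_j,\gamma_j)$ come from integrating the PDF monomials $(B_j,\beta_j)$ (so $\gamma_j=\beta_j+1$ and $C_j=B_j/\gamma_j$), and one should note that the result inherits the ``$\approx$'' from the polynomial approximation of $h(u)$ used in the statement of $f_U$; the derivation is otherwise exact. A minor sanity check I would perform is verifying $F_U(D)\approx 1$ with the stated constants $B_i=(1/73)[324,-420,96]$ and $\beta_i=[1,3,5]$, which confirms that the approximation is a valid probability distribution before plugging into the order-statistic formula.
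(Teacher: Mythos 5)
Your proposal is correct and follows essentially the same route as the paper: the complementary-event computation $F_{U_1}(u_1)=1-(1-F_U(u_1))^N$ for the minimum of i.i.d. distances, followed by differentiation via the chain rule. The only difference is that you derive the CDF coefficients by term-by-term integration, obtaining $\gamma_j=\beta_j+1$ and $C_j=B_j/(\beta_j+1)$, whereas the paper simply cites these constants $C_j=(1/73)\cdot[162,-105,16]$, $\gamma_j=[2,4,6]$ from its reference; your values agree exactly with the cited ones.
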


\begin{proof}
Similar to Lemma \ref{lemma:FR1_s}, we have the conditional CDF of $u_{1}$ by

\begin{equation}
\begin{aligned}
F_{U_{1}}(u_{1}) &=\mathbb{P}\left(u_{1}\leq u \right) =1-\mathbb{P}\left(\min \left\{U_{N}\right\}> u \right) \\
&=1-\mathbb{P}\left(U_{1}>u, U_{2}>u, \ldots, U_{N}>u \right) \\
& = 1-\left(1-F_{U}(u)\right)^{N}.
\end{aligned}
\end{equation}

Differentiating the above expression with respect to $u$, PDF of $u_{1}$
is obtained as
\begin{equation}
\begin{aligned}
f_{U_{1}}\left(u_{1}\right) = & N\left(1-F_{U}\left(u \right)\right)^{N-1} f_{U}\left(u \right)\\
= & N \Bigg(\Big(1-\sum_{j=1}^{3} \frac{C_{i}}{D^{\beta_{j}+1}} u^{\gamma_{j}}\Big)^{N-1}\Bigg) \left(\sum_{i=1}^{3} B_{i} u^{\beta_{i}} \right),
\label{eq:pfU1}
\end{aligned}
\end{equation}
where $B_{i}=(1/73) \cdot [324,-420,96] $, $\beta_{i}=[1,3,5]$,  $C_{j}=(1/73) \cdot [162,-105,16] $ and $\gamma_{j}=[2,4,6]$ as given by \cite{Fernandez:2018TVT}.
\end{proof}

For the reference susceptible individual situated at the origin, PDF of the distances between the susceptible individual and the minor infected individuals conditioned on the distance between the susceptible individual and the dominant infected individual $U_{1}$ is given by
\begin{equation}
f_{U_{i}}\left(u_{i} \mid u_{1}\right)= \frac{ u_{i} h(u_{i})}{ \int_{u_{1}}^{1} u_{i} \cdot h(u_{i}) d u_{i}},  u_{1} \leq u_{i} \leq 1,
\end{equation}
where $h(u_{i})=2\left(1-u_{i}^{2}\right) \int_{0}^{\pi} \sqrt{1-u_{i}^{2} \cos ^{2} \phi} d \phi$.

\begin{lemma}
When the infected individuals are moving according to the RWP model, the mean of virus volume (excluding the virus volume from the dominant infected individual) conditioned on the distance between the susceptible individual and the dominant infected individual $U_{1}$ is
\begin{equation}
\begin{aligned}
\mu_{I_{N-1}''} =\frac{(N-1) (V_{m}+V_{M}) \int_{u_{1}}^{1}  u_{i}^{1 -\eta} h(u_{i})\mathrm{d} u_{i}}{ 2 \int_{u_{1}}^{1} u_{i} \cdot h(u_{i}) d u_{i}},
\label{eq:EIN1_m}
\end{aligned}
\end{equation}
\label{lemma:EIN1_m}
\end{lemma}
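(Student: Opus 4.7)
The plan is to mirror the proofs of Lemma~\ref{lemma:EIN1_s} and Lemma~\ref{lemma:kEIN1_m}, which already established the template: start from the definition of the conditional mean, exploit conditional i.i.d.\ structure to reduce to a single term scaled by $N-1$, split the joint expectation into independent integrals in $V_i$ and $U_i$, and finally plug in the model-specific densities for virus volume and distance. The only thing that changes here is the distance density, which for the RWP model is the \emph{truncated} version of $f_U(u)$ rather than the uniform/RWK densities used before.

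Concretely, I would first write
\[
\mu_{I_{N-1}''} \;=\; \mathbb{E}\!\left[I_{N-1}\mid U_1,V_1\right] \;=\; (N-1)\,\mathbb{E}\!\left[V_i\,U_i^{-\eta}\mid U_1,V_1\right],
\]
using that the contributions of the $N-1$ minor infected individuals are conditionally i.i.d.\ given $(U_1,V_1)$. Next, because $\{V_i\}$ is independent of $\{U_i\}$, I would factor the expectation as
\[
\mathbb{E}\!\left[V_i\,U_i^{-\eta}\mid U_1,V_1\right] \;=\; \Bigl(\int_{V_m}^{V_M} V_i\,f_V(V_i)\,\mathrm{d}V_i\Bigr)\Bigl(\int_{u_1}^{1} u_i^{-\eta}\,f_{U_i}(u_i\mid u_1)\,\mathrm{d}u_i\Bigr),
\]
exactly as in the annotations $(a)$ and $(b)$ of the earlier proofs.

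The first factor evaluates to $(V_m+V_M)/2$ from~\eqref{eq:fv}. For the second factor I would substitute the conditional distance density
\[
f_{U_i}(u_i\mid u_1) \;=\; \frac{u_i\,h(u_i)}{\int_{u_1}^{1} u_i\,h(u_i)\,\mathrm{d}u_i},
\]
stated just above the lemma. The $u_i$ in the numerator combines with $u_i^{-\eta}$ to produce $u_i^{1-\eta}$ in the integrand, and the $u_1$-dependent normalizer passes outside as $1/\!\int_{u_1}^{1} u_i\,h(u_i)\,\mathrm{d}u_i$. Collecting the pieces and multiplying by $N-1$ gives the claimed expression. I would leave the outer $u_i$-integral unevaluated, as the lemma does, since the closed form would require handling the three-term polynomial approximation of $h$ and offers no further simplification.

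I do not anticipate a genuine obstacle: every step is a template transcription from the previous two lemmas, and the independence/conditional i.i.d.\ arguments go through unchanged. The only place that requires minor care is keeping the $u_1$-dependent normalization constant of $f_{U_i}(\cdot\mid u_1)$ outside the integral rather than absorbing it into the integrand, so that the final ratio form matches the statement of the lemma.
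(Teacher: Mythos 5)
Your proposal is correct and follows essentially the same route as the paper's own proof: the paper likewise starts from $\mathbb{E}[I_{N-1}''\mid U_1,V_1]$, pulls out $(N-1)$ by the conditional i.i.d.\ argument, writes the double integral against $f_V(V_i)$ and the truncated conditional density $f_{U_i}(u_i\mid u_1)=u_i h(u_i)/\int_{u_1}^{1}u_i h(u_i)\,\mathrm{d}u_i$, and evaluates the virus-volume factor to $(V_m+V_M)/2$ while leaving the $u_i$-integral in ratio form. The only difference is presentational: the paper compresses the factorization and substitution into a single chain of equalities (and sloppily reuses the $r_i$ notation from the static case), whereas you make the independence-based splitting explicit.
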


 \begin{proof} With the similar approach given in Lemma \ref{lemma:EIN1_s}, according to the definition of the mean of a variable, we have the expression of the mean of virus volume from minor infected individuals as

\begin{equation}
\begin{aligned}
\mu_{I_{N-1}''} =& \mathbb{E}\left[I_{N-1}'' \mid U_{1}, V_{1}\right] \\
=&(N-1) \mathbb{E}\left[V_{i} \cdot U_{i}^{-\eta} \mid U_{1}, V_{1}\right]\\
= &(N-1) \int_{V_{m}}^{V_{M}}   \int_{r_{1}}^{1}  V_{i} \cdot  r_{i}^{-\eta}   f_{R_{i}}  \left(r_{i} \mid r_{1}\right) {f_V}(V_{i}) \mathrm{d} r_{i}  \mathrm{d} V_{i}\\
=&\frac{(N-1) (V_{m}+V_{M}) \int_{u_{1}}^{1}  u_{i}^{1 -\eta} h(u_{i})\mathrm{d} u_{i}}{ 2 \int_{u_{1}}^{1} u_{i} \cdot h(u_{i}) d u_{i}}.
\end{aligned}
\end{equation}
 \end{proof}

\begin{lemma}

When the infected individuals are static, the variance of virus volume (excluding virus volume from the dominant infected individual) conditioned on the distance between the susceptible individual and the dominant infected individual $U_{1}$ is

{\small
\begin{equation}
\begin{aligned}
{\sigma_{I_{N-1}''}}&=(N-1)\left[ \frac{(V_{m}^2+V_{m}V_{M}+ V_{M}^2) }{3} \int_{u_{1}}^{1} \frac{u_{i}^{1-2\eta} h(u_{i})}{2 \pi \int_{u_{1}}^{1} u \cdot h(u_{i}) d u_{i}} \right.\\
&\quad\quad\quad\quad\left.\quad- \frac{ (V_{m}+V_{M})^2}{4} \left(\int_{u_{1}}^{1} \frac{u_{i}^{1 -\eta} h(u_{i})}{2 \pi \int_{u_{1}}^{1} u \cdot h(u_{i}) d u_{i}}\right)^2\right].
\label{eq:VIN1_m}
\end{aligned}
\end{equation}
}
\label{lemma:VIN1_m}
\end{lemma}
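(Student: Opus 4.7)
The plan is to mirror the derivations of Lemma~\ref{lemma:VIN1_s} (static) and Lemma~\ref{lemma:kVIN1_m} (RWK), since the conditional-i.i.d.\ structure is identical; only the PDF of the distance changes. I would start from the definition
\[
\sigma_{I_{N-1}''} = \operatorname{Var}\!\left[I_{N-1}'' \mid U_{1}, V_{1}\right],
\]
and use the conditional i.i.d.\ nature of the pairs $(V_i, U_i)$ for $i\ge 2$, together with the independence of $V_i$ and $U_i$, to factor the variance as $(N-1)$ times the per-term variance $\mathbb{E}[(V_i U_i^{-\eta})^2] - (\mathbb{E}[V_i U_i^{-\eta}])^2$. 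This is the same split already carried out in \eqref{eq13} and \eqref{eq:sigma}.

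Next I would expand each expectation as the double integral over $V_i\in[V_m,V_M]$ and $u_i\in[u_1,1]$, with $V_i$ weighted by $f_V$ from \eqref{eq:fv} and $u_i$ weighted by the conditional PDF $f_{U_i}(u_i \mid u_1)=u_i h(u_i)/\int_{u_1}^{1} u_i h(u_i)\,du_i$ stated just before the lemma. The $V_i$-integrals evaluate in closed form: $\int_{V_m}^{V_M} V_i\,f_V(V_i)\,dV_i = (V_m+V_M)/2$ and $\int_{V_m}^{V_M} V_i^{2}\,f_V(V_i)\,dV_i = (V_m^2+V_m V_M+V_M^2)/3$. Squaring the first expectation and subtracting produces exactly the prefactors $(V_m^2+V_m V_M+V_M^2)/3$ and $(V_m+V_M)^2/4$ appearing in \eqref{eq:VIN1_m}.

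For the $u_i$-integrals I would substitute the conditional PDF, which leaves the two integrals of $u_i^{1-2\eta}h(u_i)$ and $u_i^{1-\eta}h(u_i)$ against the common normalizer $\int_{u_1}^{1} u_i h(u_i)\,du_i$. Because $h(u_i)=2(1-u_i^2)\int_0^{\pi}\sqrt{1-u_i^2\cos^2\phi}\,d\phi$ is an elliptic-type integral, these distance integrals cannot be simplified further, so they are retained in integral form, as in the lemma statement.

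The main obstacle I anticipate is purely bookkeeping: keeping the normalization factor consistent between the unnormalized form $2\pi u h(u)$ (from the area element) and the normalized conditional PDF, so that the $2\pi$ in the denominator of \eqref{eq:VIN1_m} matches the $f_{U_i}(u_i\mid u_1)$ one actually substitutes. Once the normalization is reconciled, the rest of the argument is mechanical and follows verbatim from the RWK proof, differing only in that $f_L$ is replaced by $f_U$ and the upper limit $D$ is replaced by $1$.
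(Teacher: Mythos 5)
Your proposal follows essentially the same route as the paper's proof: the conditional variance is split into $(N-1)$ per-term variances via the conditional i.i.d.\ structure, each expectation is expanded as a double integral against $f_V$ and $f_{U_i}(u_i \mid u_1)$, the $V_i$-moments are evaluated in closed form to produce the prefactors $(V_m^2+V_mV_M+V_M^2)/3$ and $(V_m+V_M)^2/4$, and the distance integrals involving $h(u_i)$ are retained unevaluated. The normalization \emph{obstacle} you flag is in fact a genuine inconsistency in the paper: the conditional PDF stated just before the lemma carries no $2\pi$, so a direct substitution yields the result without the $2\pi$ that appears in the paper's final expression (and in the lemma statement), meaning your reconciliation step is exactly what is needed and your care on this point exceeds the paper's own proof, which simply writes the final line without justifying that factor.
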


\begin{proof}
With the similar approach given in Lemma \ref{lemma:VIN1_s}, the conditional variance of virus volume at the susceptible individual excluding the virus volume from the dominant infected individual can be calculated by

{\footnotesize
\begin{equation}
\begin{aligned}
{\sigma_{I_{N-1}''}}&=\operatorname{Var}\left[I_{N-1}' \mid U_{1}, V_{1}\right] \\
&=(N-1) \bigg[\int_{V_{m}}^{V_{M}}  \int_{u_{1}}^{D}  {(V_{i} u_{i}^{- \eta})}^2 {f_V}(V_{i}) f_{U_{i}}\left(u_{i} \mid  u_{1}\right) \mathrm{d} u_{i}\mathrm{d} V_{i}  \\
&\quad\quad\quad-\left( \int_{V_{m}}^{V_{M}}   \int_{u_{1}}^{D}  V_{i} \cdot  u_{i}^{-\eta} {f_V}(V_{i})  f_{U_{i}}  \left(u_{i} \mid u_{1}\right) \mathrm{d} u_{i}  \mathrm{d} V_{i}  \right)^{2} \bigg]\\
&=(N-1)\left[ \frac{(V_{m}^2+V_{m}V_{M}+ V_{M}^2) }{3} \int_{u_{1}}^{1} \frac{u_{i}^{1-2\eta} h(u_{i})}{2 \pi \int_{u_{1}}^{1} u \cdot h(u_{i}) d u_{i}} \right.\\
&\quad\quad\quad\quad\left.\quad- \frac{ (V_{m}+V_{M})^2}{4} \left(\int_{u_{1}}^{1} \frac{u_{i}^{1 -\eta} h(u_{i})}{2 \pi \int_{u_{1}}^{1} u \cdot h(u_{i}) d u_{i}}\right)^2\right].
\end{aligned}
\end{equation}
}
\end{proof}

We then have the following theorem of the infectious probability with the RWP model.
\begin{theorem}
\label{theorem:Pinf_m}
When the infected individuals are moving according to the RWP model, the infectious probability of a susceptible individual is expressed by
\begin{equation}
\begin{aligned}
\mathbb{P}_\text{inf} \approx \int_{V_{m}}^{V_{M}} \int_{0}^{1} & \left[Q\left(\frac{V_\text{th} - V_{1}\cdot u_{1}^{-\alpha}-\mu_{I_{N-1}''}}{\sigma_{I_{N-1}''}}\right)\right] \\
&\qquad\qquad\quad \times \frac{f_{U_{1}}( u_{1})}{V_{M}-V_{m} } \mathrm{d} u_{1}  \mathrm{d} V_{1},
\end{aligned}
\end{equation}
\end{theorem}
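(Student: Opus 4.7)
The plan is to mirror the structure of the proofs of Theorem~\ref{theorem:Pinf_s} (static case) and Theorem~\ref{theorem:kPinf_m} (RWK case), adapting each step to the RWP distance distribution derived in Lemma~\ref{lemma:FR1_m} and using the conditional mean and variance established in Lemma~\ref{lemma:EIN1_m} and Lemma~\ref{lemma:VIN1_m}. The three ingredients I need are: (i) an exact expression for $\mathbb{P}_\text{inf}$ obtained by conditioning on the dominant infected individual, (ii) a CLT approximation for the aggregate contribution of the minor infected individuals, and (iii) substitution of the RWP-specific densities.

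First, I would start from the definition $\mathbb{P}_\text{inf}=P(I_\text{inf}\ge V_\text{th})$ and condition on the nearest-interferer distance $U_1$ and its virus volume $V_1$, which are independent by construction. Using the fact that $V_1$ is uniform on $[V_m,V_M]$ with density $1/(V_M-V_m)$ and $U_1$ has density $f_{U_1}(u_1)$ from \eqref{eq:fU1}, the total probability law yields
\begin{equation}
\mathbb{P}_\text{inf}=\int_{V_m}^{V_M}\!\int_0^{1} P\!\left(I_\text{inf}>V_\text{th}\mid U_1,V_1\right) f_{U_1}(u_1)\, f_V(V_1)\,\mathrm{d}u_1\,\mathrm{d}V_1.
\end{equation}

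Second, I would split $I_\text{inf}=V_1 u_1^{-\eta}+I_{N-1}$, where $I_{N-1}=\sum_{i\ge 2} V_i R_i^{-\eta}$ is a sum of i.i.d. random variables conditional on $U_1$ (by the conditional i.i.d. property already used in Lemma~\ref{lemma:EIN1_m}). Invoking the central limit theorem, $I_{N-1}$ is approximated by a Gaussian random variable with mean $\mu_{I''_{N-1}}$ and variance $\sigma_{I''_{N-1}}$. This reduces the inner conditional probability to the Gaussian tail
\begin{equation}
P\!\left(I_{N-1}>V_\text{th}-V_1 u_1^{-\eta}\right)\approx Q\!\left(\frac{V_\text{th}-V_1 u_1^{-\eta}-\mu_{I''_{N-1}}}{\sigma_{I''_{N-1}}}\right).
\end{equation}

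Finally, plugging this Q-function approximation, together with $f_V(V_1)=1/(V_M-V_m)$ and $f_{U_1}(u_1)$ from \eqref{eq:fU1}, back into the outer integral gives exactly the claimed expression. The main obstacle is the justification of the CLT step: for moderate $N$ the Gaussian approximation of a heavy-tailed sum (with tail driven by $r^{-\eta}$) is delicate, and here it is further complicated because the summands depend on the approximate RWP density from Lemma~\ref{lemma:FR1_m} rather than a clean uniform density, so the accuracy of $\mu_{I''_{N-1}}$ and $\sigma_{I''_{N-1}}$ inherits the polynomial approximation error in $f_U(u)$. I would address this by appealing to the same dominant-interferer isolation argument used earlier (from \cite{Chetlur:2017TCOM}): conditioning on $U_1$ removes the nearest, most heavy-tailed term, leaving an aggregate whose moments are finite and for which the Gaussian approximation is standard in the stochastic-geometry literature, with the residual error absorbed into the $\approx$ symbol.
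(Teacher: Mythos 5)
Your proposal follows essentially the same route as the paper's proof: conditioning on the dominant infected individual's distance $U_1$ and virus volume $V_1$ to write $\mathbb{P}_\text{inf}$ as a double integral of the conditional tail probability, applying the CLT to the conditionally i.i.d. aggregate $I_{N-1}$ of the minor individuals to obtain the $Q$-function with $\mu_{I_{N-1}''}$ and $\sigma_{I_{N-1}''}$, and substituting $f_V(V_1)=1/(V_M-V_m)$ together with $f_{U_1}(u_1)$ from Lemma~\ref{lemma:FR1_m}. Your added remarks on justifying the Gaussian approximation via the dominant-interferer isolation of~\cite{Chetlur:2017TCOM} make explicit what the paper leaves implicit, but the argument itself is the same.
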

where $f_{U_{1}}( u_{1})$ is given by (\ref{eq:pfU1}).

\begin{figure*}[t]
\centering
\subfigure[$ N = 20 $, $ D=20$m]{
\label{fig:Pinf_V-a}
\includegraphics[width=5.8cm]{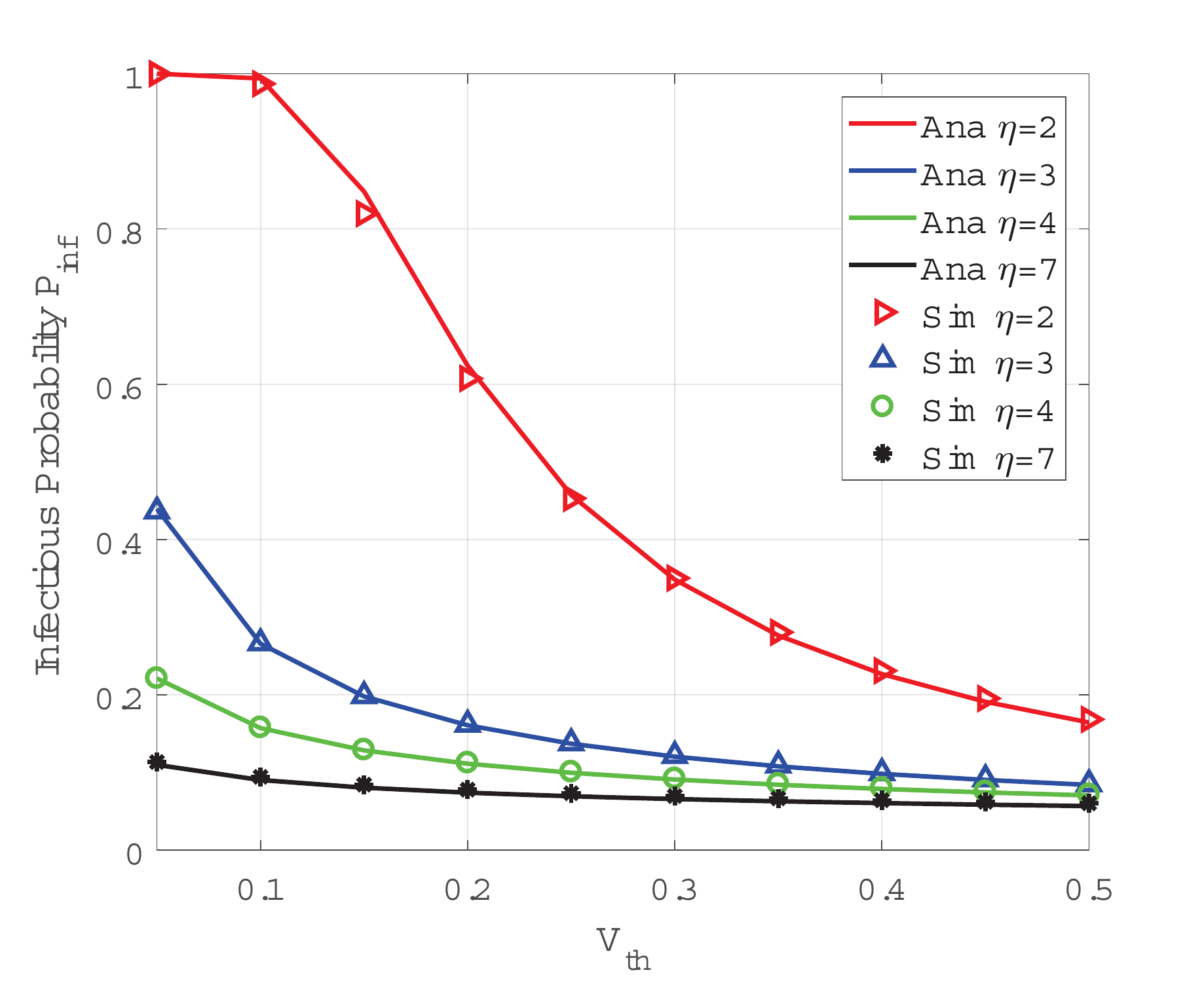}}\hfil
\subfigure[$ N = 20$, $ D=100$m]{
\label{fig:Pinf_V-b}
\includegraphics[width=5.8cm]{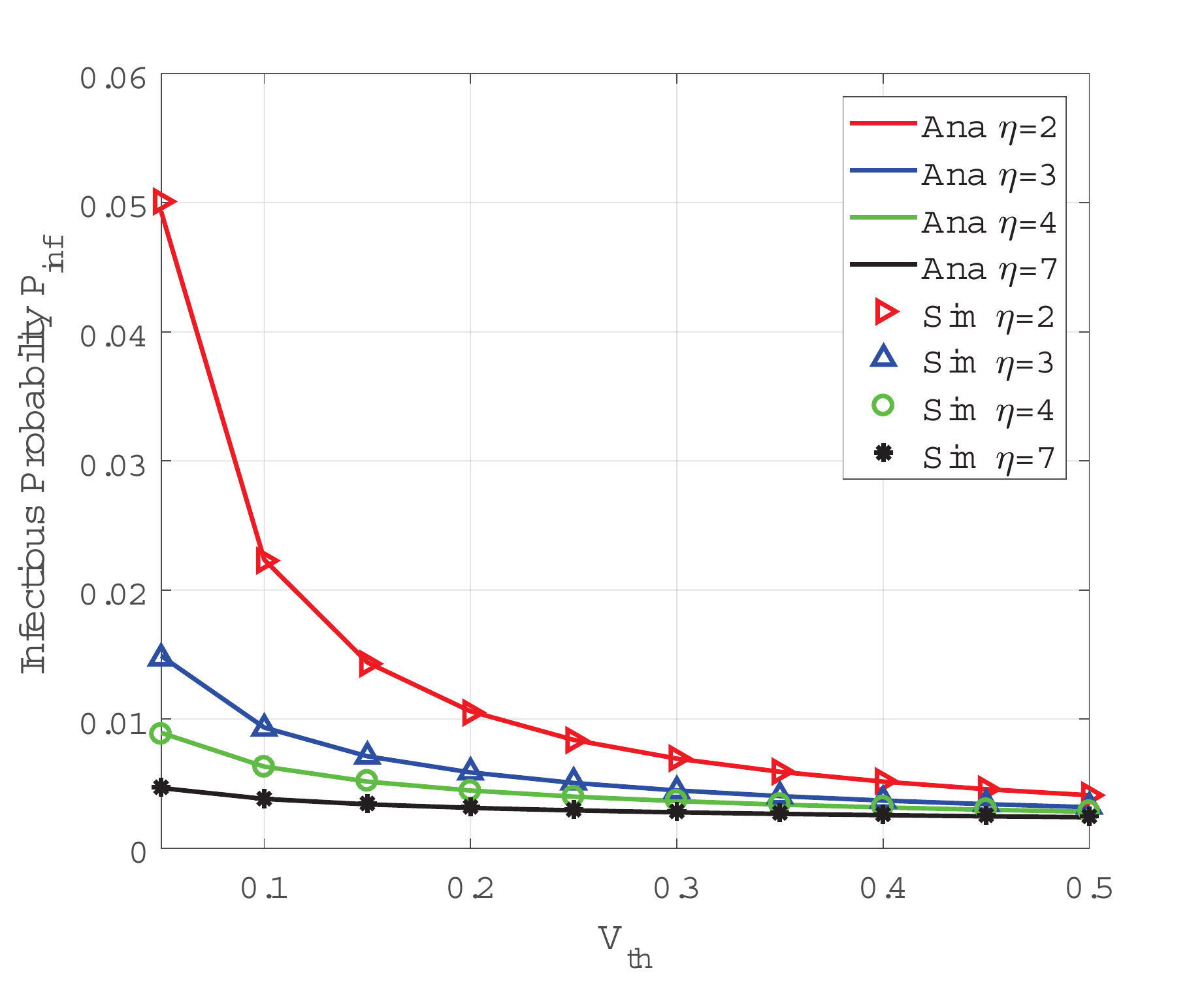}}\hfil
\subfigure[$N = 10$, $ D=100$m]{
\label{fig:Pinf_V-c}
\includegraphics[width=5.8cm]{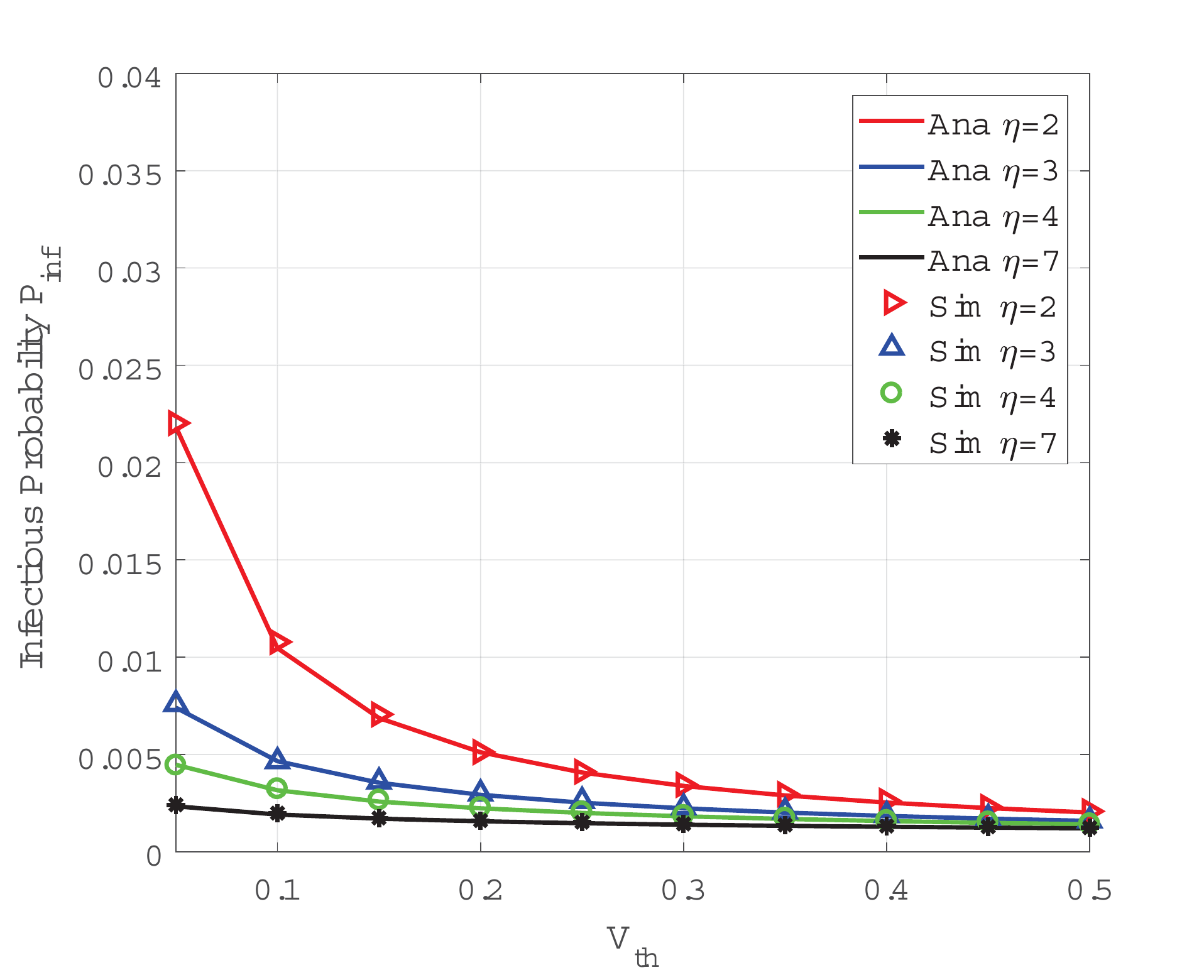}}\hfil
\caption{Infectious probability $\mathbb{P}_\text{inf}$ versus the threshold virus volume ${V}_\text{th}$ in a small network area when infected individuals are static.}
\label{fig:Pinf_V}
\end{figure*}

\begin{proof}By definition,  the infectious probability of a susceptible individual when  the infected individuals are moving according to the RWP model is given by

{\small
\begin{equation}
\begin{aligned}
\mathbb{P}_\text{inf} =  \int_{V_{m}}^{V_{M}} \int_{0}^{1} P\left(I_\text{inf}>V_\text{th} \mid  U_{1}, V_{1}\right) & f_{U_{1}}( u_{1})  {f_V}(V_{1}) \mathrm{d} u_{1}  \mathrm{d} V_{1}\\
\int_{V_{m}}^{V_{M}} \int_{0}^{1} P\left(I_\text{inf}>V_\text{th} \mid  U_{1}, V_{1}\right) &
\cdot \frac{f_{U_{1}}( u_{1})}{V_{M}-V_{m} }  \mathrm{d} u_{1}  \mathrm{d} V_{1}.
\end{aligned}
\end{equation}
}

As stated earlier, $I_{N-1}$ is the sum of i.i.d. random variables.
Therefore, by applying CLT, the infectious probability can be computed by
CDF of a Gaussian random variable. Therefore, we have
\begin{equation}
\begin{aligned}
P\left(I_\text{inf}>V_\text{th} \mid  U_{1}, V_{1}\right)
=& P\left(\sum_{i \in \Phi } V \cdot u_{i}^{-\eta} >V_\text{th} \mid  U_{1}, V_{1}\right) \\
=& P\left( I_{N-1} >V_\text{th} -V_{1}\cdot u_{1}^{-\eta} \right) \\
=& Q\left(\frac{V_\text{th}-V_{1}\cdot u_{1}^{-\eta}-\mu_{I_{N-1}''}}{\sigma_{I_{N-1}''}}\right).
\end{aligned}
\end{equation}
\end{proof}

\section{Numerical Results}
\label{sec:result}

In this section, we conduct extensive Monte-Carlo simulations to compare simulation results with analytical results to evaluate the accuracy of the proposed analytical models. Throughout the entire section, the curves represent the analytical results and the markers represent the simulation results. We present the numerical results of the infectious probability $\mathbb{P}_\text{inf}$ with consideration of both static infected individuals and moving infected individuals (with different moving models).

\subsection{Infectious Probability with Static Individuals}
\label{sec:resultA}
We first present the numerical results when the infected individuals are static.  In the following results, the virus volume is normalized and the minimal virus volume $V_{m}$ is chosen to be 0.5, and the maximal virus volume $V_{M}$ is set to be 1.5. The path loss factor $\eta$ is chosen from the range $[2,7]$~\cite{LIU2021106542}.

\begin{figure}[h]
\centering
\includegraphics[width=7.8cm]{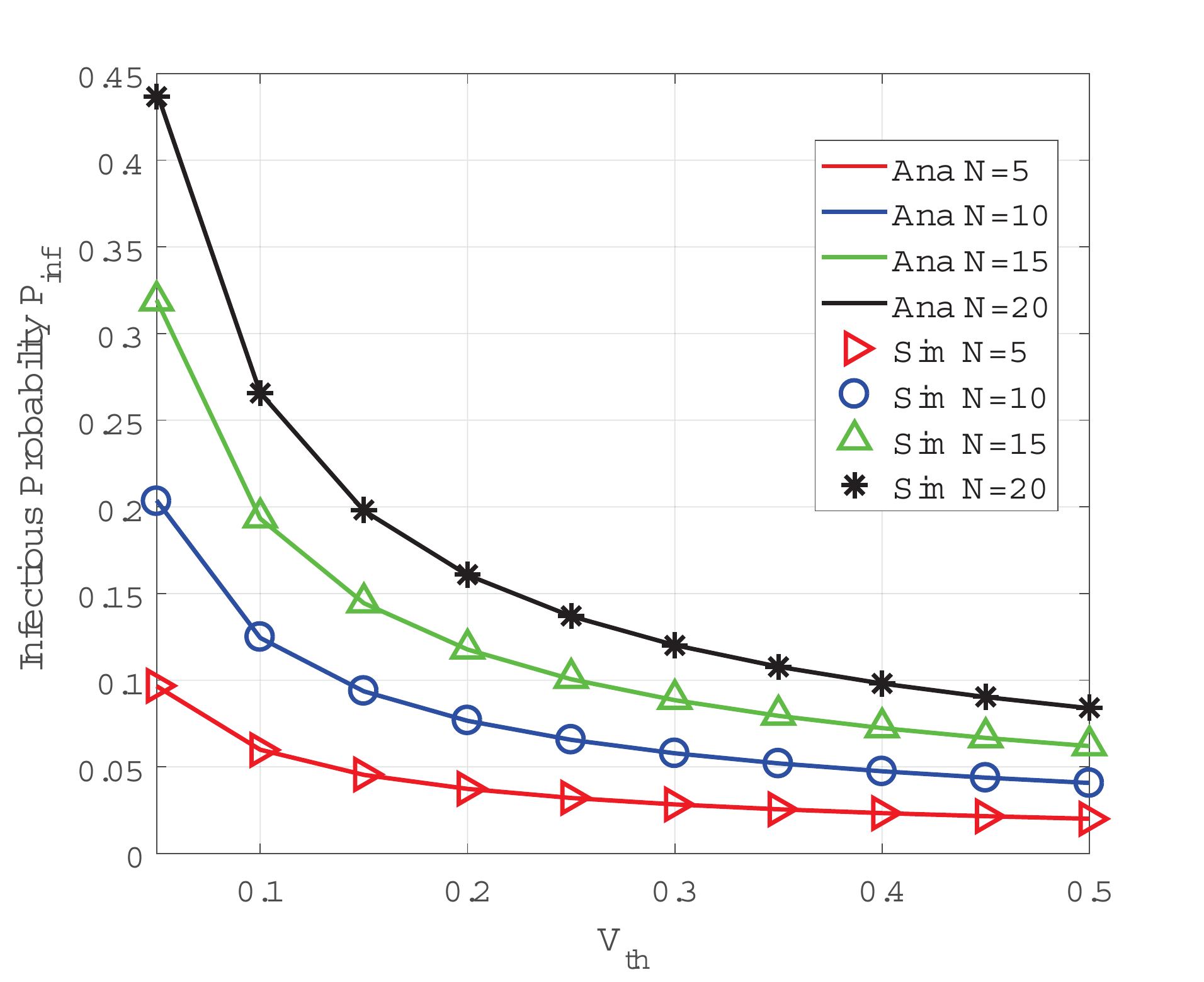}
\caption{Infectious probability $\mathbb{P}_\text{inf}$ versus the threshold of virus volume with $\eta=3$ and $ D=20$m when infected individuals are static.}
\label{fig:Pinf_N}
\end{figure}

Fig.~\ref{fig:Pinf_V} presents the first set of results of infectious probability $\mathbb{P}_\text{inf}$ in a small network area. In this set of results, we compare the infectious probability $\mathbb{P}_\text{inf}$ with the different number of infected individuals $N$, the radius of the virus spreading region $D$ and path loss factor $\eta$. From Fig.~\ref{fig:Pinf_V-a}, we find that $\mathbb{P}_\text{inf}$ significantly decreases when the threshold virus volume $V_\text{thr}$ increases. In this figure, different colors of the curves and markers represent the results with different values of path loss factor $\eta$. When path loss factor $\eta$ decreases, $\mathbb{P}_\text{inf}$ increases quickly, especially in Fig.~\ref{fig:Pinf_V-a}. The number of infected individuals $N$ is 20 in both Fig.~\ref{fig:Pinf_V-a} and Fig.~\ref{fig:Pinf_V-b}, while the radius of the results in Fig.~\ref{fig:Pinf_V-a} is smaller than that in Fig.~\ref{fig:Pinf_V-b}; it implies the higher density of infected individuals in Fig.~\ref{fig:Pinf_V-a} than that in Fig.~\ref{fig:Pinf_V-b}. As a result, $\mathbb{P}_\text{inf}$ in Fig.~\ref{fig:Pinf_V-a} is much higher than that in Fig.~\ref{fig:Pinf_V-b}. For example, when the virus volume threshold $V_\text{th}$ is 0.1 and path loss factor $\eta$ is 2, $\mathbb{P}_\text{inf}$ in Fig.~\ref{fig:Pinf_V-a} is 0.9904 while $\mathbb{P}_\text{inf}$ in Fig.~\ref{fig:Pinf_V-b} is only 0.2241. When the radius of the virus spreading region $D$ remains to be 100m, and the number of infected individuals reduces from 20 to 10 as shown in Fig.~\ref{fig:Pinf_V-b} and Fig.~\ref{fig:Pinf_V-c}, respectively, we find $\mathbb{P}_\text{inf}$ significantly decreases. For example, when the virus volume threshold $V_\text{th}$ is 0.1, and path loss factor $\eta$ is 2, $\mathbb{P}_\text{inf}$ in Fig.~\ref{fig:Pinf_V-b} is 0.2241, while $\mathbb{P}_\text{inf}$ in Fig.~\ref{fig:Pinf_V-c} is only 0.0113.

\begin{figure*}[t]
\centering
\subfigure[$ N = 1000 $, $ D=2000$m]{
\includegraphics[width=5.8cm]{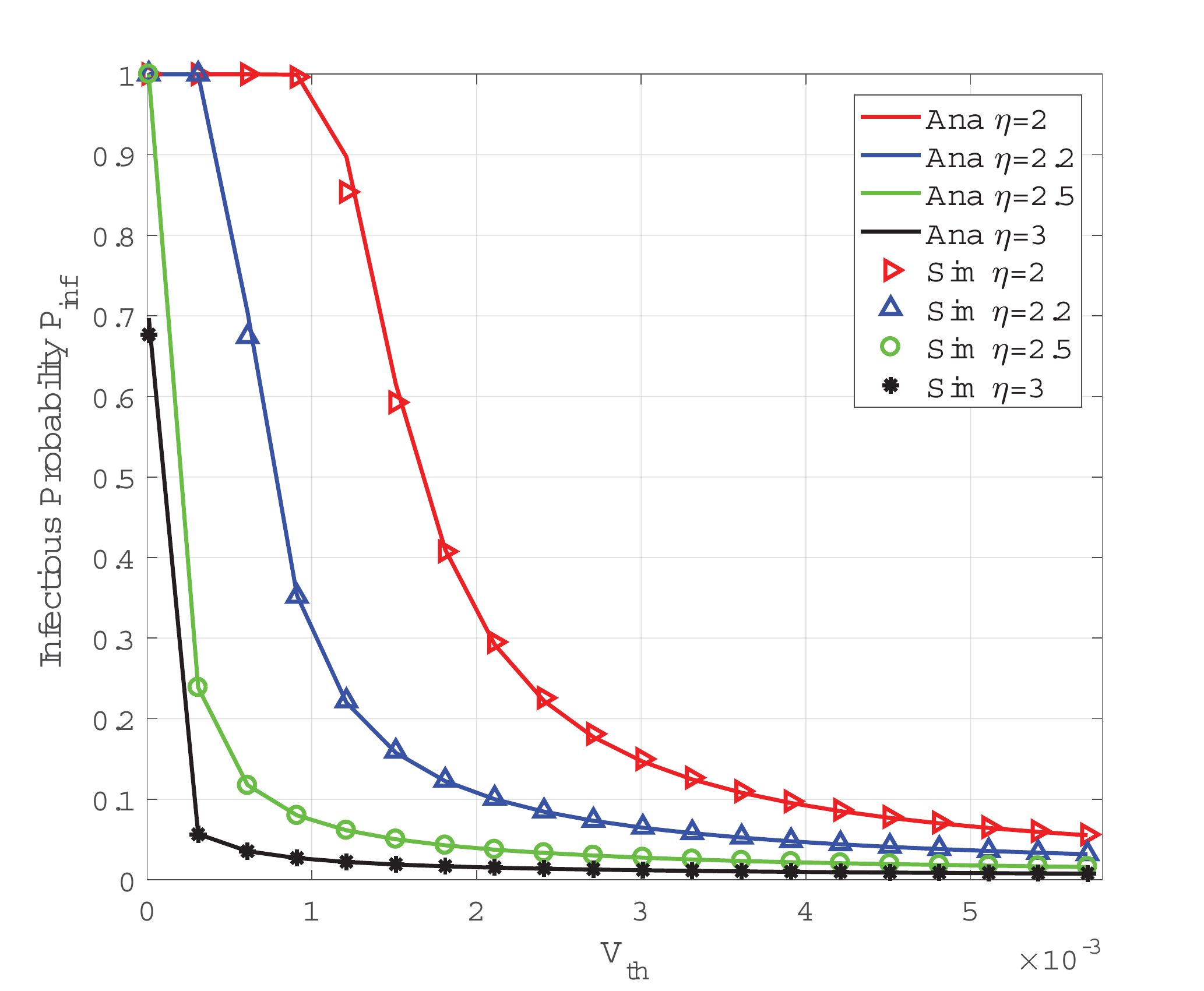}}\hfil
\subfigure[$ N = 2000$, $ D=2000$m]{
\includegraphics[width=5.8cm]{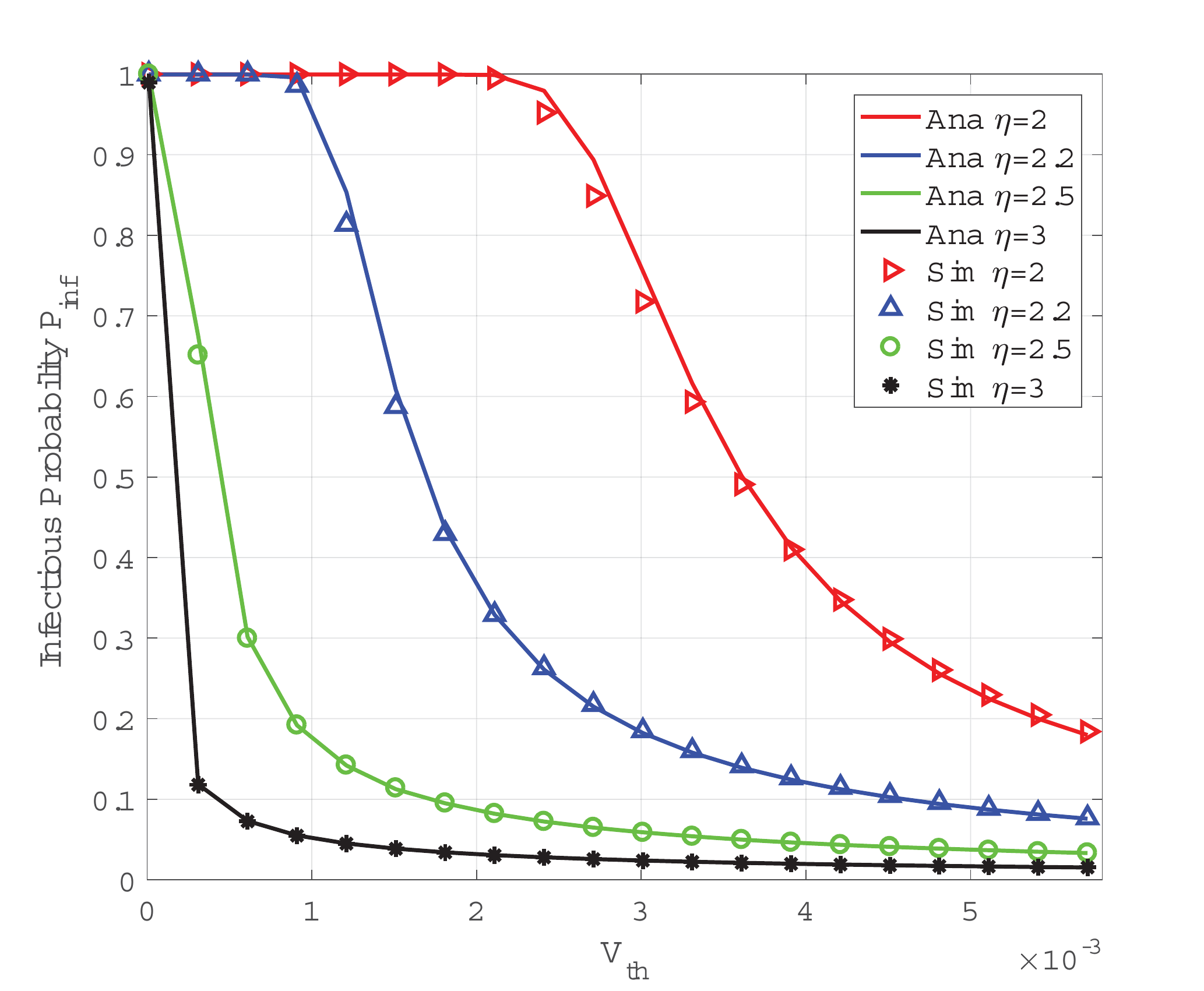}}\hfil
\subfigure[$N = 2000$, $ D=5000$m]{
\includegraphics[width=5.8cm]{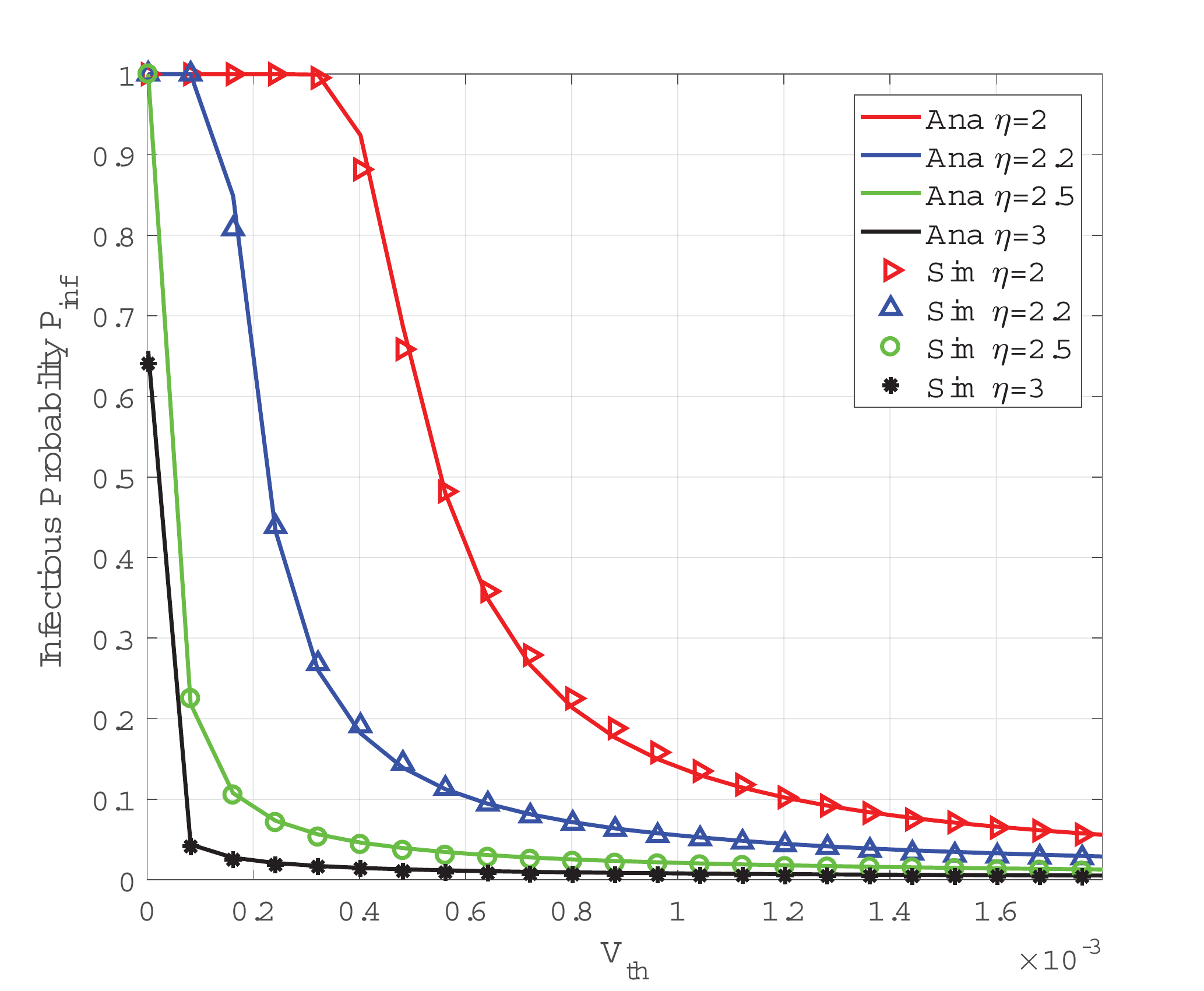}}\hfil
\caption{Infectious probability $\mathbb{P}_\text{inf}$ versus the threshold virus volume $V_\text{th}$ in large network area when infected individuals are static.}
\label{fig:Pinf_V_large}
\end{figure*}


These results in Fig.~\ref{fig:Pinf_V} suggest two potentially effective methods to mitigate $\mathbb{P}_\text{inf}$:  1) increasing the path loss factor $\eta$ of virus spreading and 2) reducing the density of infected individuals around the reference susceptible individual. These methods confirm with previous studies~\cite{ABBOAHOFFEI2021100013,SUN2020102390} that effective countermeasures against the spreading of COVID-19 disease include keeping social distance, avoiding overcrowded environment, and wearing a surgical face mask.


\begin{figure}[ht]
\centering
\includegraphics[width=7.5cm]{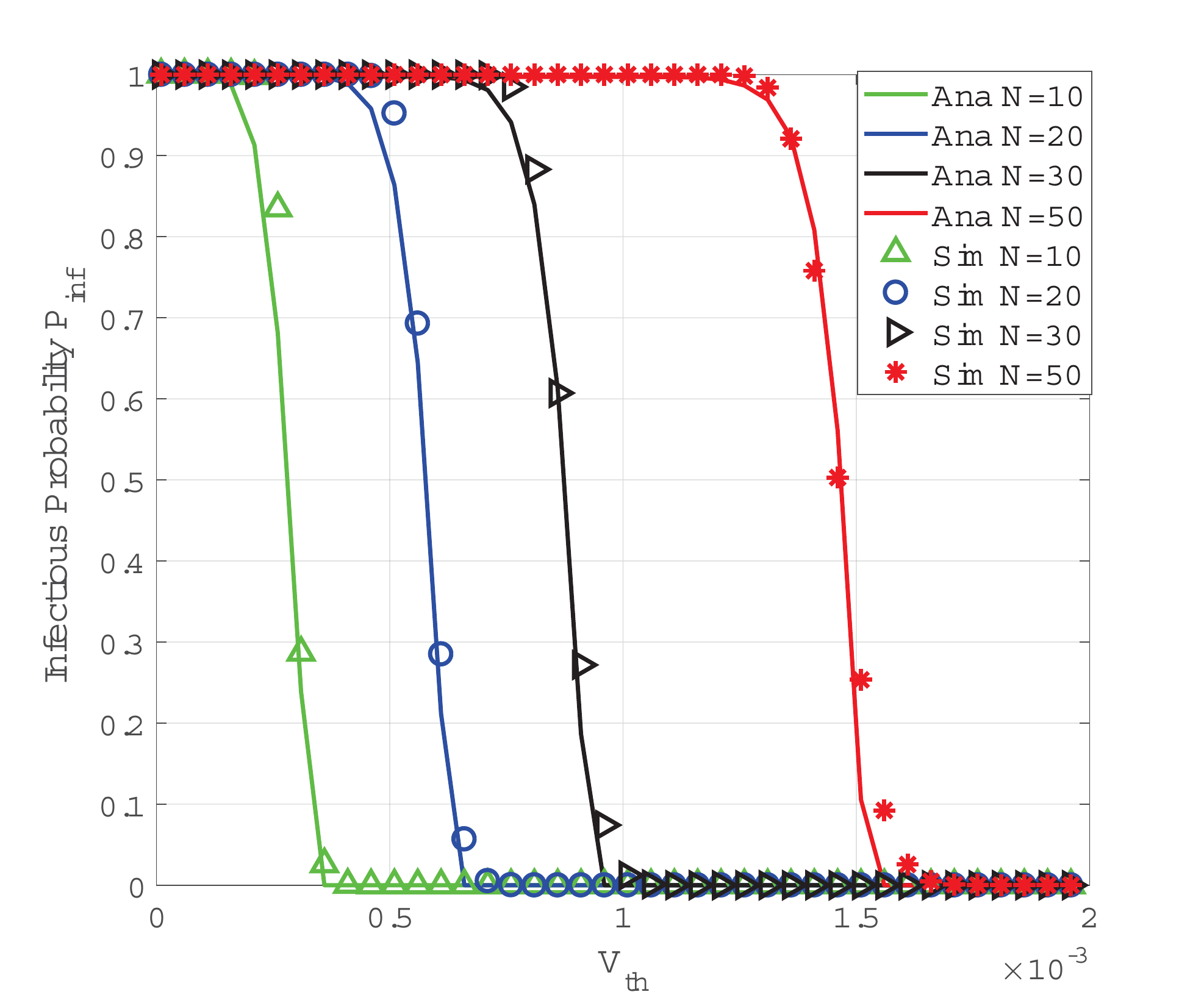}
\caption{Infectious probability $\mathbb{P}_\text{inf}$ versus the threshold virus volume ${V}_\text{th}$ with path loss factor $\eta = 2.5$, network radius $D=100$m and moving distance $W=20$m when infected individuals are moving with the RWK model.}
\label{fig:RWK_N_D100}
\end{figure}

\begin{figure}[ht]
\centering
\includegraphics[width=7.2cm]{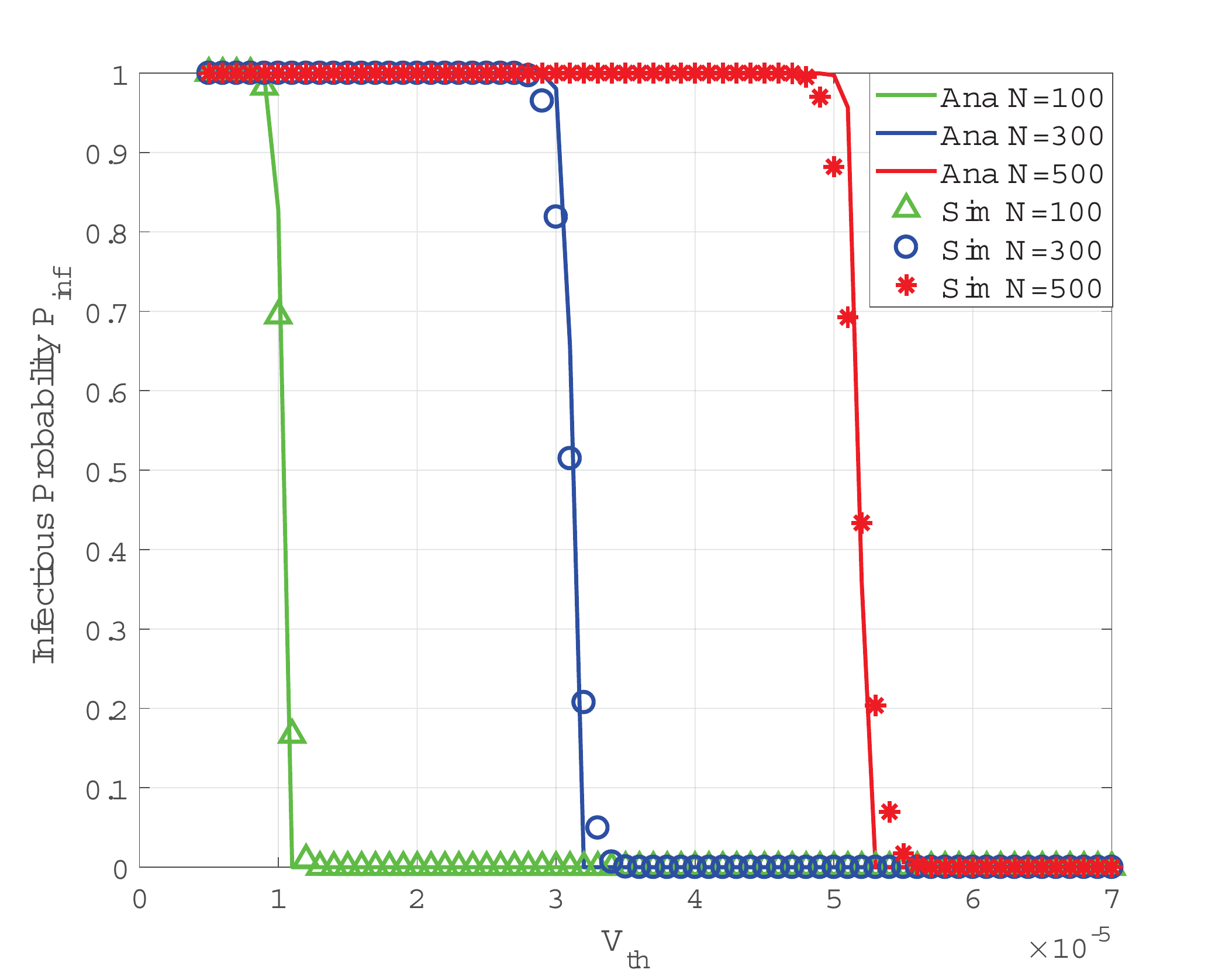}
\caption{Infectious probability $\mathbb{P}_\text{inf}$ versus the threshold virus volume ${V}_\text{th}$ with path loss factor $\eta = 2.5$, network radius $D=1000$m and moving distance $W=200$m when infected individuals are moving with the RWK model.}
\label{fig:RWK_N_D1000}
\end{figure}

Fig.~\ref{fig:Pinf_N} presents the results of infectious probability $\mathbb{P}_\text{inf}$ versus the threshold of virus volume when $\eta=3$ and $ D=20$. In this figure, different colors of the curves and markers represent different numbers of the infected individuals. From this figure, we find the increased number of infected individuals $N$ will lead to the increase of infectious probability $\mathbb{P}_\text{inf}$. For example, when virus volume threshold $V_\text{th}$ is 0.1 and $N$ is 2, $\mathbb{P}_\text{inf}$ in Fig.~\ref{fig:Pinf_V-b} is 0.064, while $\mathbb{P}_\text{inf}$ is increased to 0.1962 when $N$ becomes 15.

Fig.~\ref{fig:Pinf_V_large} presents the set of results of infectious probability $\mathbb{P}_\text{inf}$ in a large network area with a large number of infected individuals. In this set of results, we find that $\mathbb{P}_\text{inf}$ decreases fast when the path loss factor $\eta$ increases. Compared with the set of results in Fig.~\ref{fig:Pinf_V}, the results in Fig.~\ref{fig:Pinf_V_large} vary in a much smaller range of the threshold virus volume $V_\text{th}$. When the number of infected individuals $N$ decreases, the density of infected individuals in this network increases, and the $\mathbb{P}_\text{inf}$ increases.

\subsection{Infectious Probability with Mobile Individuals}
We next present the numerical results when the infected individuals are moving according to the RWK model and RWP model. Kindly note that the results of infected individuals moving with the RD model are the same as the results of the static individuals because the mobility has no impact on the infectious probability as shown in Section~\ref{subsubsec:RD}.
We present the results of infected individuals moving with the RWK model and RWP model as follows.

\subsubsection{Results with RWK model}

We first present the results of infectious probability when infected individuals are moving with the RWK model in a relatively small network area. Fig.~\ref{fig:RWK_N_D100} shows the results, where the network radius $D$ is assumed to be 100m and the moving distance of each infected individual is assumed to be 20m, the path loss factor  $\eta$ is set to be 2.5. From Fig.~\ref{fig:RWK_N_D100}, we find that the number of infected individuals $N$ is still one of the key factors to determine the infectious probability. In particular, when the threshold virus volume is $1 \times 10^{-3}$ and the number of infected individuals $N$ is 30, the infectious probability is nearly 0. However, when the number of infected individuals $N$ becomes 50 and the threshold virus volume remains to be $1 \times 10^{-3}$, the infectious probability sharply increases to be 1. 

We then present the result of infectious probability with a larger network area. Fig.~\ref{fig:RWK_N_D1000} presents the results. We can observe from Fig.~\ref{fig:RWK_N_D1000} that the infectious probability drops fast as the threshold virus volume $V_\text{th}$ increases. For example, when the number of infected individuals $N$ is 300 and $V_\text{th}$ is $3 \times 10^{-5}$, the infectious probability is 0.9. When $V_\text{th}$ is increased to $3.4 \times 10^{-5}$, the infectious probability drops to nearly 0. Comparing Fig.~\ref{fig:RWK_N_D1000} with Fig.~\ref{fig:RWK_N_D100}, we can conclude that the infection risk for susceptible individuals becomes lower in a larger network area with a lower density of infected individuals.

\begin{figure}[t]
\centering
\includegraphics[width=7.8cm]{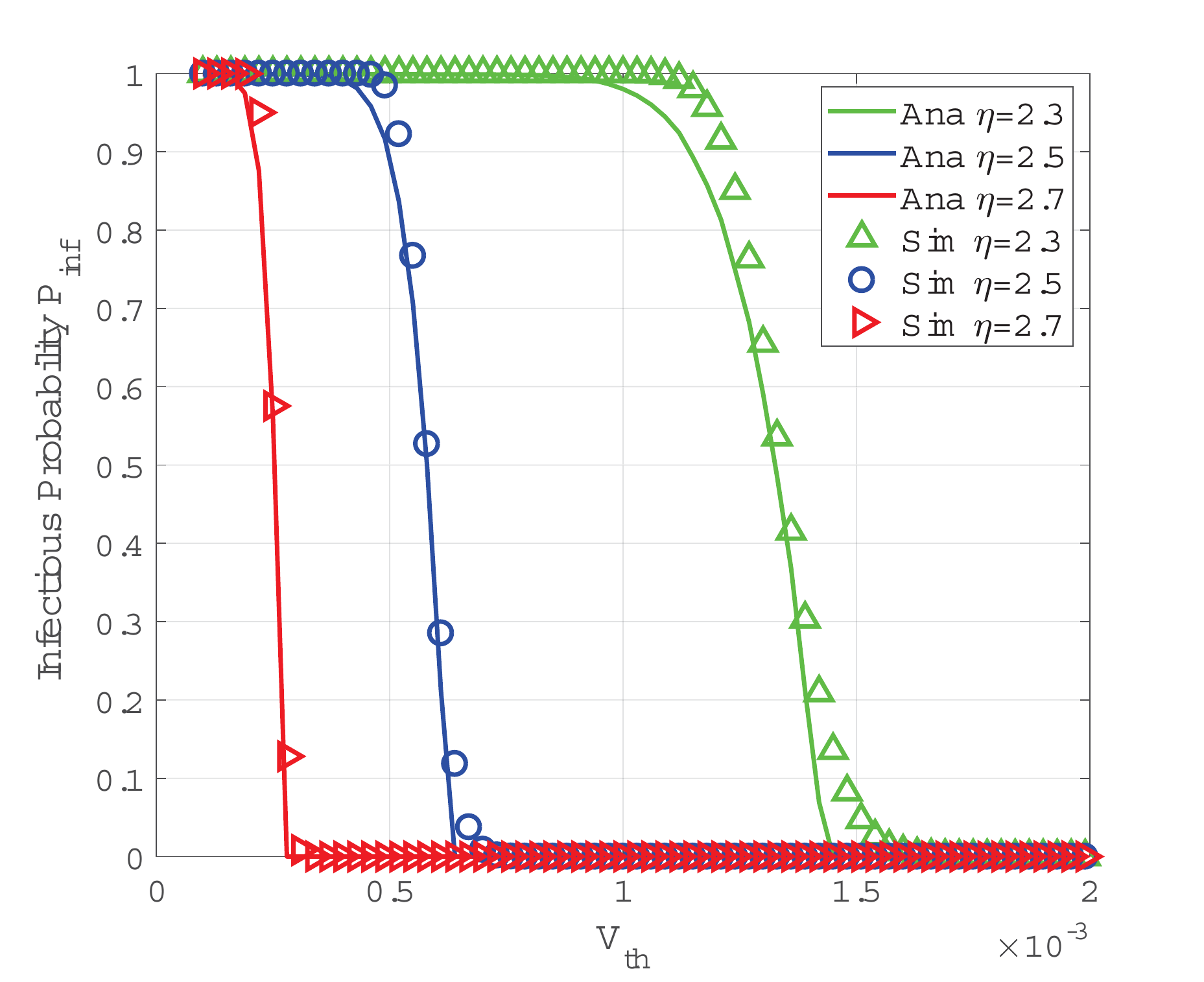}
\caption{Infectious probability $\mathbb{P}_\text{inf}$ versus the threshold virus volume ${V}_\text{th}$ with number of infected individuals $N = 20$, network radius $D=100$m and moving distance $W=20$m when infected individuals are moving with the RWK model.}
\label{fig:RWK_a_D100}
\end{figure}

We further investigate the impact of path loss factor $\eta$ on the infectious probability when infected individuals are moving with the RWK model. Fig.~\ref{fig:RWK_a_D100} plots the results. Comparing Fig.~\ref{fig:RWK_a_D100} with Fig. ~\ref{fig:Pinf_V-b}, where both two figures have the same number of infected individuals $N$ and network radius $D$, we find that the infectious probability in Fig.~\ref{fig:RWK_a_D100} is more sensitive to the path loss factor $\eta$. When the path loss factor $\eta$ increases from 2.3 to 2.5, the infectious probability significantly decreases. The reason for this phenomenon may lie in that the mobility leads to more contact chances (i.e., higher infectious risks) between the susceptible individuals and infected individuals, and the path loss factor $\eta$ may possibly determine whether the contact chances lead to the consequent infections to susceptible individuals.




\subsubsection{Results with RWP model}

We next present the results of the infectious probability $\mathbb{P}_\text{inf}$ with the RWP model. Fig.~\ref{fig:RWP_a_2to3} presents the infectious probability $\mathbb{P}_\text{inf}$ versus the threshold virus volume when the number of infected individuals $N$ is 20 and the radius of network area $D$ is 100. It is shown in Fig.~\ref{fig:RWP_a_2to3} that the infectious probability $\mathbb{P}_\text{inf}$ increases with the decreased path loss factor $\eta$. For instance, when the virus volume threshold $V_\text{th}$ is 0.01, $\mathbb{P}_\text{inf}$ is 0.105 when $\eta$ is $3$ though $\mathbb{P}_\text{inf}$ increases to be 0.819 when $\eta$ becomes $2$.

\begin{figure}[ht]
\centering
\includegraphics[width=7.4cm]{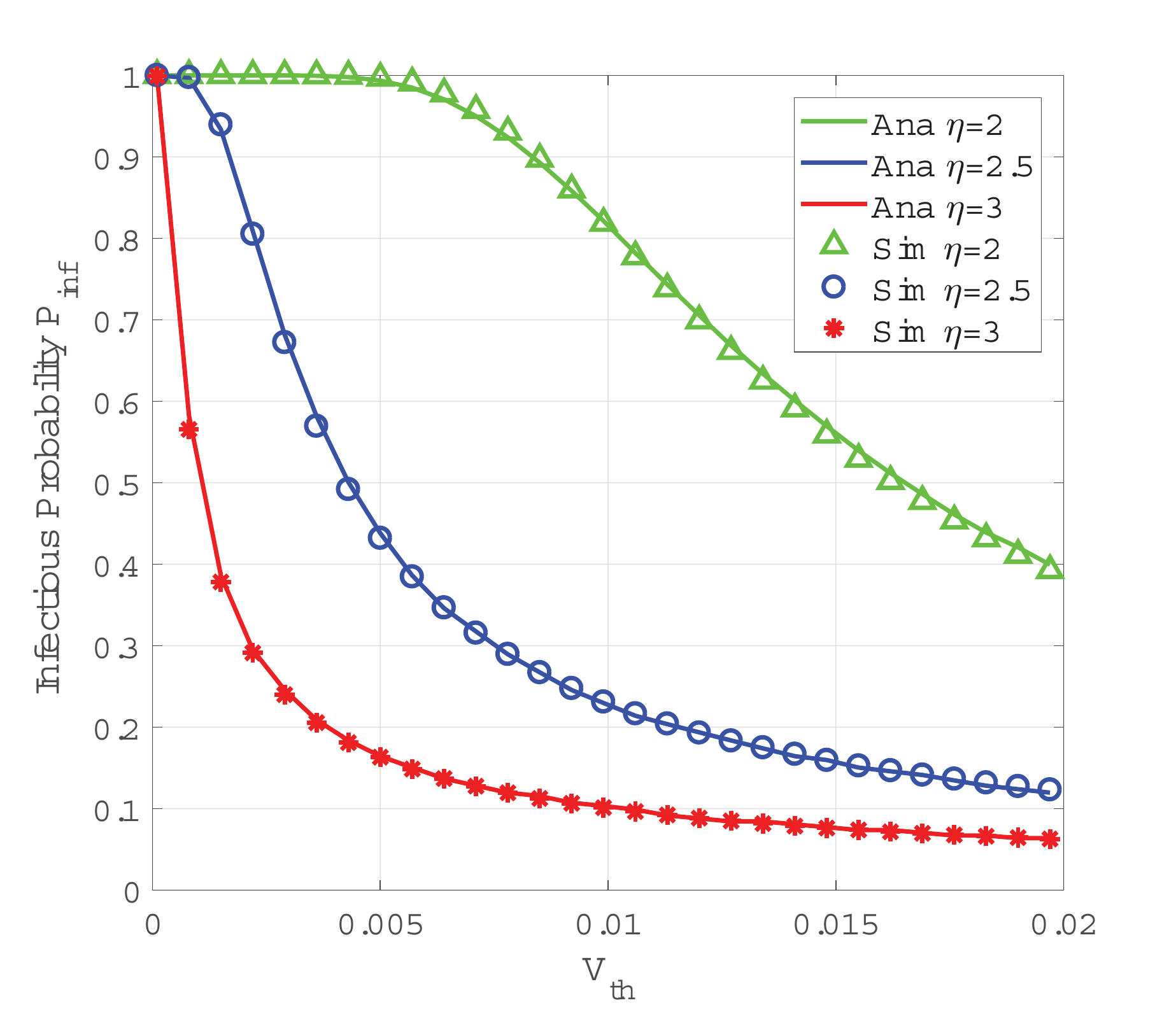}
\caption{Infectious probability $\mathbb{P}_\text{inf}$ versus the threshold virus volume ${V}_\text{th}$ with number of infected individuals $N = 20$ and network radius $D=100$m when infected individuals are moving with the RWP model.}
\label{fig:RWP_a_2to3}
\end{figure}

\begin{figure}[t]
\centering
\includegraphics[width=7.4cm]{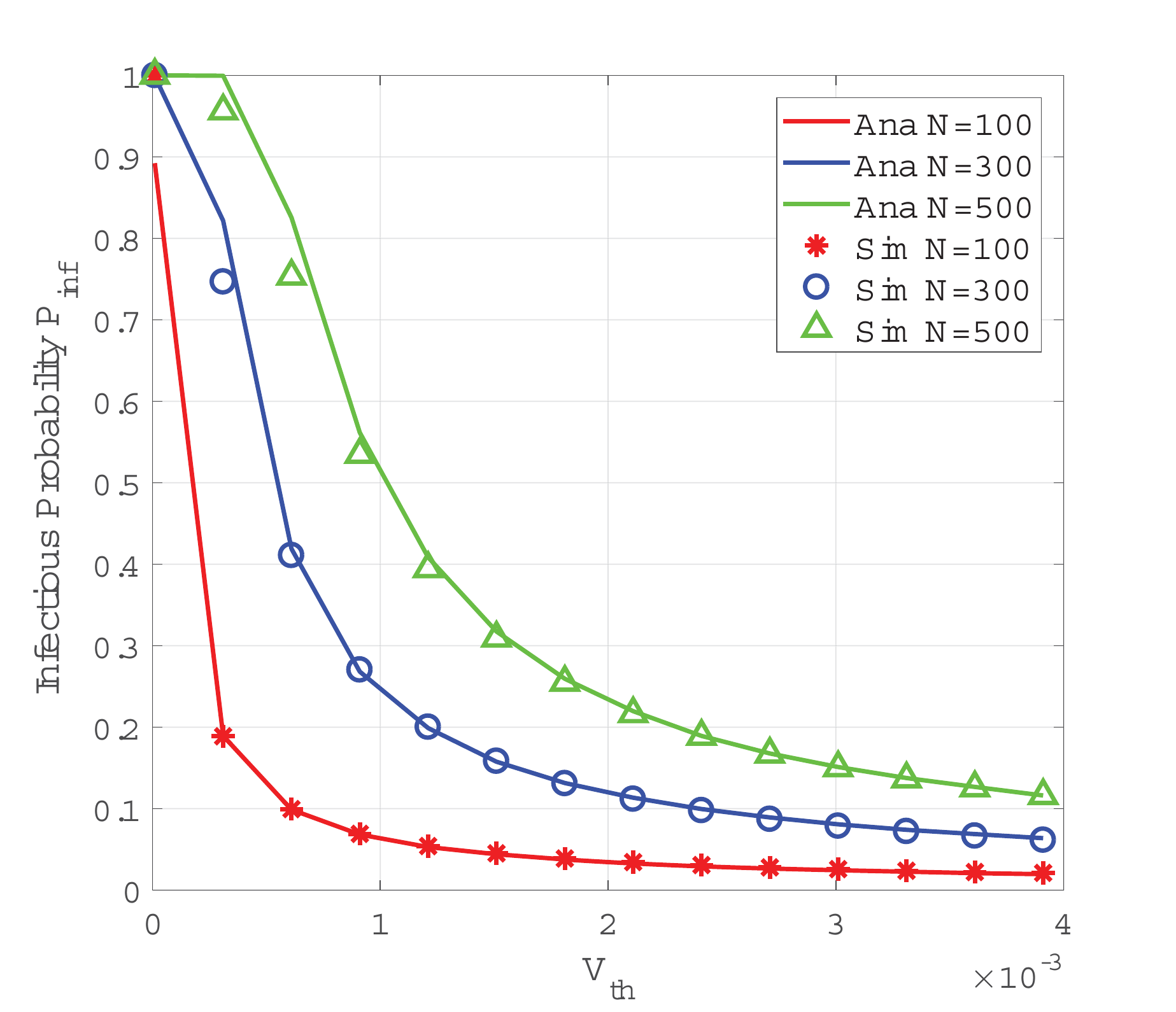}
\caption{Infectious probability $\mathbb{P}_\text{inf}$ versus the threshold virus volume ${V}_\text{th}$ with path loss factor $\eta = 2.5 $ and network radius $D=1000$m when infected individuals are moving with RWP model.}
\label{fig:RWP_N13500}
\end{figure}

\begin{figure}[t]
\centering
\includegraphics[width=7.4cm]{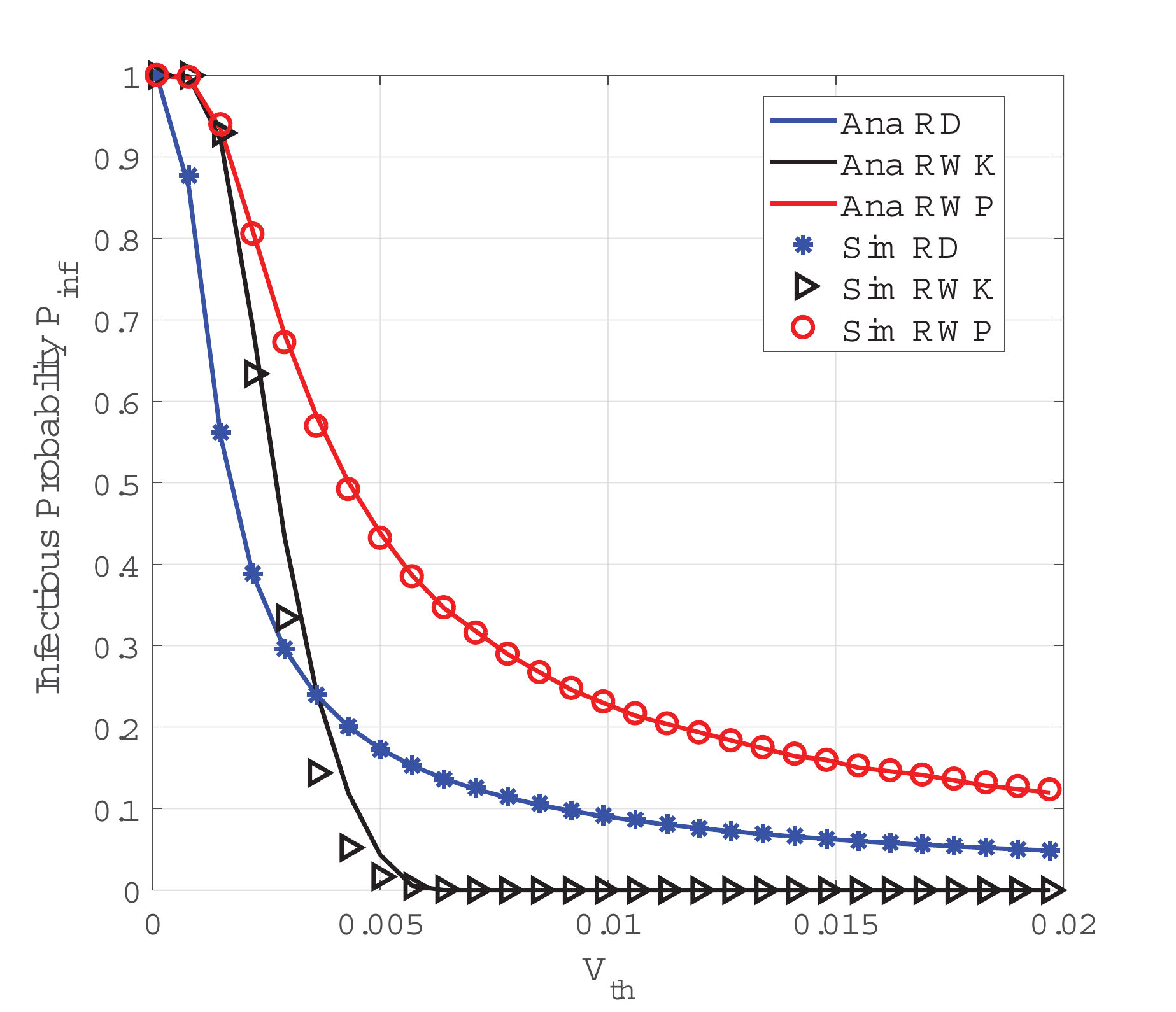}
\caption{Infectious probability $\mathbb{P}_\text{inf}$ versus the threshold virus volume ${V}_\text{th}$ with path loss factor $\eta = 2.5 $ and network radius $D=100$m when $20$ infected individuals ($N = 20$) are moving with three mobility models.}
\label{fig:Com_N20D100}
\end{figure}

Moreover, we present the results of the infectious probability when infected individuals are moving with the RWP model in a larger network area.
Fig.~\ref{fig:RWP_N13500} plots the results of the impact of infected individuals (i.e., $N$) on the infectious probability when the path loss factor $\eta = 2.5 $ and network radius $D=1000$m. We observe that the infectious probability $\mathbb{P}_\text{inf}$ increases when the number of infected individuals $N$ increases. For instance,
when the virus volume threshold $V_\text{th}$ is $1 \times 10^{-3}$ and the number of infected individuals $N$ is 100, the infectious probability $\mathbb{P}_\text{inf}$ is 0.064. However, $\mathbb{P}_\text{inf}$ becomes 0.512 when $N$ increases to be 500.
Comparing Fig.~\ref{fig:RWP_N13500} with Fig.~\ref{fig:RWP_a_2to3}, we find the infection risk of susceptible individuals in a larger network area with lower density is lower than that in a smaller network with high density.

\subsubsection{Results of Comparison}

\begin{figure}[t]
\centering
\includegraphics[width=7.4cm]{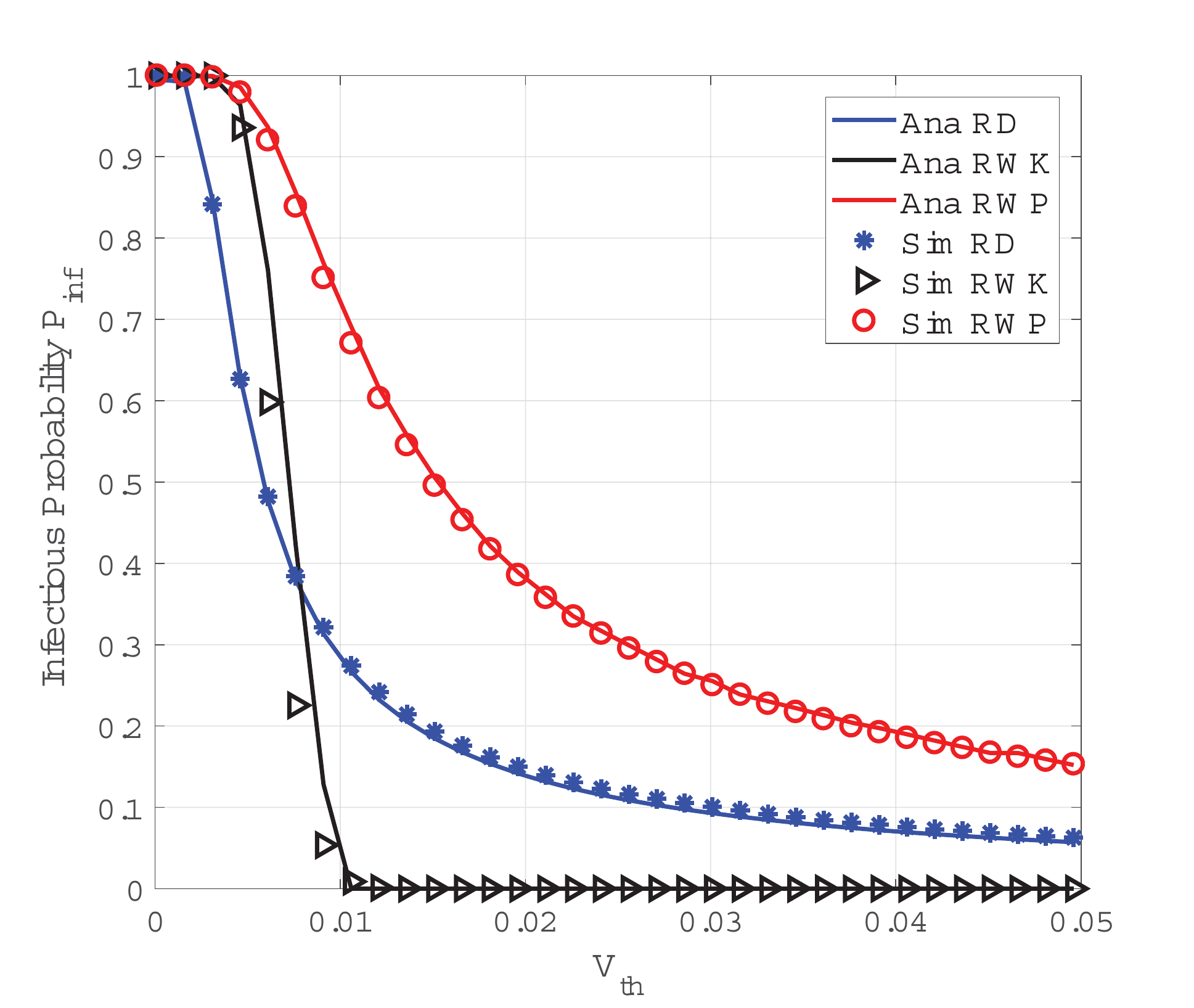}
\caption{Infectious probability $\mathbb{P}_\text{inf}$ versus the threshold virus volume ${V}_\text{th}$ with path loss factor $\eta = 2.5 $ and network radius $D=100$m when $50$ infected individuals ($N = 50$) are moving with three mobility models.}
\label{fig:Com_N50D100}
\end{figure}

We next compare the impacts of three mobility models on the infectious probability $\mathbb{P}_\text{inf}$ with path loss factor $\eta = 2.5 $ and network radius $D=100$m. Fig.~\ref{fig:Com_N20D100} presents the results of $\mathbb{P}_\text{inf}$ when 20 infected individuals are moving with the RD model, RWK model and RWP model.
As mentioned before, the results of infected individuals moving with the RD model are the same as the results of static individuals because the mobility has no impact on the infectious probability as shown in Section~\ref{subsubsec:RD}. Therefore, the green curve in Fig.~\ref{fig:Com_N20D100} also represents the results of static individuals.

Fig.~\ref{fig:Com_N20D100} shows that $\mathbb{P}_\text{inf}$ with the RWP model is always higher than that with the RD model. This result can be easily explained with the help of Fig.~\ref{fig:Trails-c} as given in Section~\ref{subsec:mobi-prob}. From the walking trails of the infected individual, we find that the distances between the infected individuals and the susceptible individual (located at the center of the circular region) are not uniformly distributed. More specifically, the infected individuals more likely move to be closer to the center of the circular region than to the boundary. Therefore, the infected individuals have higher contact chances to the susceptible individual. Moreover, we also observe that $\mathbb{P}_\text{inf}$ with the RWK model is higher than $\mathbb{P}_\text{inf}$ with the RD model when $V_\text{th}$ is smaller than $3.6 \times 10^{-3}$. However, $\mathbb{P}_\text{inf}$ with the RWK model is lower than that with the RD model when $V_\text{th}$ is larger than $3.6 \times 10^{-3}$. This phenomenon could probably be explained by the trails of the RWK model (given in Fig.~\ref{fig:Trails-b}) since the individual usually moves in a restricted area, which is situated between the center and the boundary of the circular network. In this case, the infected individuals have lower contact chances to the susceptible individual compared with the RWP model.

Fig.~\ref{fig:Com_N50D100} presents the results of $\mathbb{P}_\text{inf}$ when 50 infected individuals are moving with three mobility models.
Comparing Fig.~\ref{fig:Com_N50D100} with Fig.~\ref{fig:Com_N20D100}, we observe that $\mathbb{P}_\text{inf}$ with the RWP model is still higher than that with the RWP model when the number of infected individuals $N$ increases to 50.
Moreover, $\mathbb{P}_\text{inf}$ with the RWK model is higher than that with the RD model when $V_\text{th}$ is small (less than $7.3 \times 10^{-3}$). Similarly,  $\mathbb{P}_\text{inf}$ with the RWK model is higher than that with the RD model when $V_\text{th}$ is large.
When the the number of infected individuals $N$ increases from 20 to 50,  $\mathbb{P}_\text{inf}$ with all the three mobility models increases significantly.

\section{Countermeasures based on Wireless Edge Networks}
\label{sec:future}

The numerical results in Section~\ref{sec:result} offer many insightful implications. For example, the infectious probability $\mathbb{P}_\text{inf}$ is mainly influenced by the infectious model, the density of infected individuals, and the mobility models of individuals. Therefore, inspired by these findings, we can design countermeasures against the spread of COVID-19 based on wireless edge networks, as shown in Fig.~\ref{fig:countermeasures}. We summarize these countermeasures in the following three perspectives.

\begin{figure}[t]
\centering
\includegraphics[width=8.8cm]{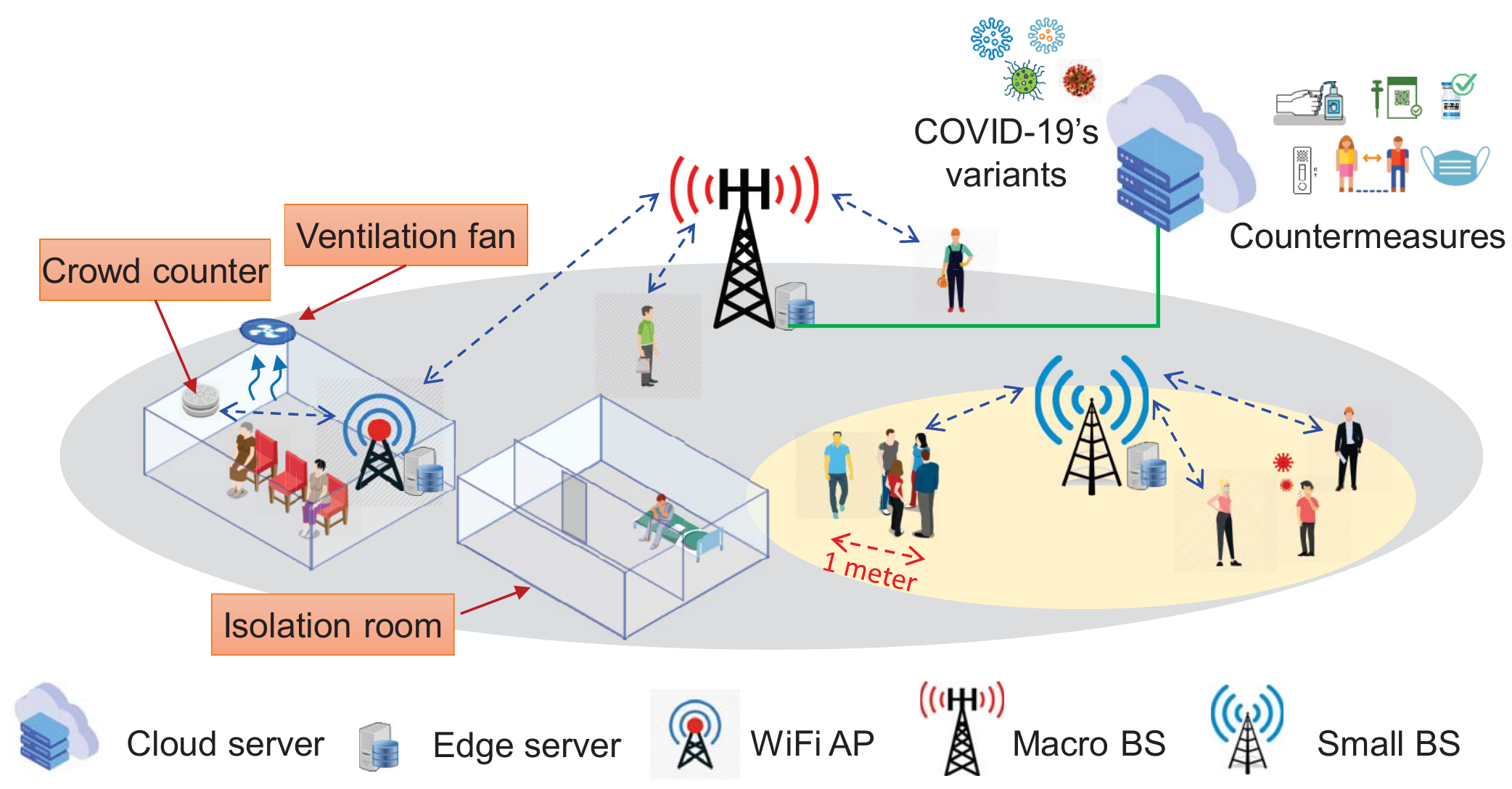}
\caption{Countermeasures based on Wireless Edge Networks}
\label{fig:countermeasures}
\end{figure}


\subsection{Informing public to keep social distance}

Our numerical results show that the infectious probability is mainly affected by multiple factors, such as the path loss factor of virus spreading, the crowd density, and the mobility models. For example, the shorter radius of the virus spreading region will lead to the higher infectious probability. Thus, as an effective countermeasure suggested by many previous studies~\cite{SUN2020102390}, keeping social distance is necessary, especially in the crowded environment.

With the aid of wireless edge networks, we can design a set of corresponding countermeasures to inform the public to keep social distance. For example, once the number of infected individuals is known by a nearby base station (e.g., a small base station in 4G/5G networks) and triggers the predefined threshold (e.g., 10 infected individuals), the edge server will broadcast notifications (e.g., sending text messages) to the public within this network so that the public can keep social distance (e.g., keeping a least 1 meter social distance) and wear surgery face masks.

Moreover, as shown previous studies~\cite{ABBOAHOFFEI2021100013,SUN2020102390}, COVID-19 and its variants have higher transmission risks in indoor environments or other places lacking enough ventilation. The integration of wireless edge networks with other services can potentially address this issue. In particular, the edge server deployed in approximation to the restaurant can automatically notify the central ventilation system to increase the speed of ventilation fans if the number of people in a room reaches a threshold (or being notified with an infected individual appearing in the restaurant). The number of people can be obtained by a crowd counter, which is connected to a nearby WiFi AP (or base station).


\subsection{Isolating infected individuals in time}

As shown in recent studies~\cite{chow2021care,tan2021curating}, in many population-dense countries and regions, there are much higher transmission risks of COVID-19 and its variants in nursing home than other places. Meanwhile, older adults living in nursing home often have higher fatalities than other healthy adults due to underlying comorbidity (i.e., the simultaneous existence of multiple diseases). Thus, it is crucial to protect older adults, especially in nursing home and other healthcare institutions.

The introduction of wireless edge networks can offer a potential solution to this issue. For example, once being informed with a number of infected individuals in nursing home, the edge server (located at the nursing home) can notify the healthcare workers to take necessary actions, e.g., isolating the infected individuals to negative pressure isolation rooms so as to reduce the risk of virus spreading.




\subsection{Orchestration with cloud computing to take new countermeasures}
There are a number of variants of SARS-CoV-2 since its outbreak in 2020. Each variant has different characteristics, such as fatalities, transmission risks, and immune escape, which pose the challenges in taking immediate and effective countermeasures against the spreading of the virus.

Wireless edge networks cannot work alone to address the challenges brought by new variants of COVID-19 due to the limitation of computing and storage capabilities of edge nodes. Thus, the orchestration of edge computing with cloud computing becomes a necessity to combat COVID-19 and its variants. Since cloud servers have powerful storage capacity, real-world epidemiological data (available through World Health Organization and CDC) can be saved at remote cloud servers. When the infectious model of a new variant is available (done by epidemiologists), CDC and other departments may make new countermeasures (policies) against the spreading of this new variant. Remote cloud servers may distribute both infectious models and new policies to edge nodes so that new countermeasures can be immediately made to different communities (e.g., older adults in nursing home).



\balance

\section{Conclusion}
\label{sec:conclusion}

We have experienced the pandemic of COVID-19 around the world. This paper aims to provide precautionary measures against COVID-19 and other infectious diseases with the aid of wireless edge networks, which are essentially an integration of edge computing with wireless networks. In particular, we present an analytical framework to predict the infectious probability of infectious diseases with the assistance of wireless edge networks. Motivated by previous studies on transmission probabilities of infectious diseases, we propose a stochastic geometry-based method to analyze the infectious probability of individuals within wireless edge networks due to the availability of the recorded detention time of individuals and the density of infectious individuals and susceptible individuals. Compared with other methods that require the locations or trajectories of users, the proposed method can better protect the privacy of individuals since only the recorded detention time of individuals and the density of individuals in a network are required. Moreover, our analytical framework also considers three types of mobility models, thereby being closer to realistic scenarios. Extensive numerical results show that analytical results well match simulation results, consequently validating the accuracy of the proposed model. In addition, we also offer a number of countermeasures against the spread of COVID-19 based on wireless edge networks, including notification to keep social distance, isolation of infected individuals, and orchestration with cloud computing. We believe that the in-depth integration of other technologies, such as AI and big data analytics with wireless edge networks can eventually combat the pandemic of COVID-19.



\begin{appendices}
\section{Proof of Lemma 4}

Inspired by the previous study~\cite{Hyytia:2006TMC},
we adopt a similar approach to derive the simplified probability density function $f_{L }(l)$ of distance $l$ between the infected individual and susceptible individual.
In this appendix, we consider a circular area with radius $D$ where a susceptible individual is located at the center of this circular area $O$. Meanwhile, the infected individual is moving within this area according to the RWK model from the original point $ P_1$ to the endpoint $ P_2$. According to the RWK model, the length of line segment $ \overline{ P_1 P_2} $ is a fixed value $W$. Since the original location of the infected individual is uniformly distributed in the circular area, the length $z$ of line segment $ \overline{ P_1 O} $ is given by the following equation,
\begin{equation}
f_{Z}(z)=\frac{2 z}{D^{2}}, 0 < z \leq D.
\label{eq:fz}
\end{equation}

The endpoint $ P_2$ will be located on the boundary of a circular area, which is centered at the original point $ P_1$ with radius $W$. In the moving process from$ P_1$ to $ P_2$, the location of the infected individual may be at each point of this circular area since the moving direction is randomly selected within $[0,2 \pi]$.
We then derive the expressions of the probability density function $f_{L }(l)$ of the distance $l$ according to two different cases of $z$ as follows.

\emph{Case 1:} $W \leq z \leq D$.

When the length $z$ is determined, we can calculate the conditional CDF of $l$,
\begin{equation}
F_{L \mid Z}(l\mid z) = \frac{S}{\pi W^2},
\label{eq:FLZ}
\end{equation}
where $S$ is the target region satisfying the following condition: when the infected individual moves into the target area, the distance between this infected individual and the susceptible individual is smaller than $l$. Therefore, we derive the cumulative probability that the distance between the infected individual and the susceptible individual is less than $l$ with (\ref{eq:FLZ}); this is consistent with the definition of CDF.

According to the geometrical relationships of $W$, $l$ and $Z$, we can calculate the area of target region.
When $z-W \leq l<z$, the area of target region is derived by
\begin{equation}
\begin{aligned}
S&=(W^{2}\theta_{1}- W \sin \theta_{1} W \cos \theta_{1}) +( l^2  \theta_{2} - l \sin \theta_{2} l \cos \theta_{2})\\
&=W^{2}\theta_{1}+l^2  \theta_{2}- W z \sin \theta_{1},
\end{aligned}
\nonumber
\end{equation}
where $\theta_{1}$ equals to $\arccos \left[(W^{2}+z^{2}-l^{2})/(2 W z)\right]$, and $\theta_{2}$ equals to $\arccos \left[(l^{2}+z^{2}-W^{2})/(2 l z)\right]$.

Similarly, when $z \leq l<z+W$, the area of target region is calculated by
\begin{equation}
\begin{aligned}
S=&\pi W^{2}-\left[\left(W^{2}\left(\pi-\theta_{1}\right)-W \sin (\pi- \theta_{1}) \cdot W \cos \left(\pi-\theta_{1}\right)\right) \right.\\
 & \qquad\quad \left. -\left(l^{2} \theta_{2}-l \sin \theta_{2} \cdot l \cos \theta_{2}\right)\right] \\
=& W^{2} \theta_{1}+l^{2} \theta_{2}-W z \sin \theta_{1}.
\end{aligned}
\nonumber
\end{equation}

Summarizing the above equations, we have the following expression of $F_{L \mid Z}(l \mid z)$ as follows,

{\small
\begin{equation}
F_{L \mid Z}(l \mid z)=\left\{\begin{array}{l}
 \displaystyle 0,  \qquad\qquad\qquad\qquad\quad\qquad 0 \leq l<z-W \\
 \displaystyle\frac{W^{2} \theta_{1}+l^{2} \theta_{2}- z W \sin \theta_{1}}{\pi W^{2}},\ z-W \leq l<z+W \\
 \displaystyle 1, \qquad\qquad\qquad\qquad\qquad\quad z+W \leq l<D,
\end{array}\right.
\nonumber
\end{equation}
}

From the definition of Conditional CDF, we have
\begin{equation}
F_{L \mid Z}(l \mid z)=\frac{\int_{-\infty}^{l} f(l, z) d l}{f_{Z}(z)}.
\label{eq:jointFcase1}
\end{equation}

After the transformation of (\ref{eq:jointFcase1}), we can calculate the joint PDF of $l$ and $z$ as follows,
\begin{equation}
f(l, z)=\frac{\partial\left(F_{L \mid Z}(l \mid z) f_{Z}(z)\right)}{\partial l},
\nonumber
\end{equation}
where $f_{Z}(z)$ is given by (\ref{eq:fz}).

According to the property of joint PDF, we have
\begin{equation}
f_{L}(l)=\int_{-\infty}^{+\infty} f(l, z) d z =\int_{W}^{D} f(l, z) d z.
\label{eq:flcase1}
\end{equation}

We next derive the CDF by integrating PDF with $l$,
\begin{equation}
F_{L}(l)=\int_{-\infty}^{+\infty} f_{L}(l) d l  =\int_{z-W}^{D} f_{L}(l) d l.
\label{eq:Flcase1}
\end{equation}


\emph{Case 2}: $0 \leq z < W$.

Adopting the same approach to Case 1, we derive the expression of $F_{L \mid Z}'(l \mid z)$ as follows,

{\small
\begin{equation}
 \displaystyle F_{L \mid Z}'(l \mid z)=\left\{\begin{array}{l}
 \displaystyle \frac{ l^{2}}{W^{2}} ,  \qquad\qquad\qquad\qquad\qquad 0 \leq l<W-z \\
 \displaystyle\frac{ W^{2} \theta_{1}+l^{2} \theta_{2}- z W \sin \theta_{1}}{\pi W^{2}}, \ W-z \leq l< W+z \\
 \displaystyle 1,  \qquad\qquad\qquad\qquad\qquad\quad W+z \leq l<D.
\end{array}\right.
\nonumber
\end{equation}
}

Following the same derivation process in Case 1, we can derive the expressions of $f(l, z)$, $f_{L}(l)$ and $F_{L}(l)$ on condition of $0 \leq z < W$ .

The joint PDF of $l$ and $z$ is derived as follows,
\begin{equation}
f(l, z)'=\frac{\partial(F_{L \mid Z}'(l \mid z) f_{Z}(z))}{\partial l},
\nonumber
\end{equation}

The expression of PDF of $l$ is give by
\begin{equation}
f_{L}'(l)=\int_{o}^{W} f'(l, z) d z.
\label{eq:flcase2}
\end{equation}

Then we have the CDF of $l$ as follows,
\begin{equation}
F_{L}'(l)=\int_{0}^{D} f_{L}'(l) d l.
\label{eq:Flcase2}
\end{equation}

After the integration and summary of (\ref{eq:flcase1}) and (\ref{eq:flcase2}), we get the expression of $f_{L}(l)$ in (\ref{eq:kpdf}). Similarly, after the integration and summary of (\ref{eq:Flcase1}) and (\ref{eq:Flcase2}) , we get the expression of and the expression of $F_{L}(l)$ in (\ref{eq:kcdf}) as given in Lemma~\ref{lemma:kpdf}.

\end{appendices}

\bibliography{IEEEabrv,health}


\end{document}